\documentclass[%
 reprint, amsmath, amssymb, aps, superscriptaddress, prx]{revtex4-2}
\usepackage{graphicx}% Include figure files
\usepackage{multirow}
\usepackage{dcolumn}
\usepackage{booktabs}
\usepackage{siunitx}
\usepackage{bm}% bold math
\usepackage{hyperref}% add hypertext capabilities
\usepackage[mathlines]{lineno}% Enable numbering of text and display math
\usepackage{physics}
\usepackage{float}
\usepackage{diagbox}
\usepackage{booktabs}
\usepackage[normalem]{ulem}
\usepackage{resizegather}
\usepackage{newtxmath}

\usepackage{amsmath, amssymb, amsthm}
\newtheorem{theorem}{Theorem}

\newtheorem{definition}{Definition}
\usepackage{algorithm}
\usepackage{algpseudocode}

\hypersetup{
    colorlinks=true,
    linkcolor=red,
    filecolor=magenta,
    citecolor=blue,
    urlcolor=blue,
}

\newcommand{\IITB}{Department of Physics, Indian Institute of Technology Bombay, Powai, Mumbai, Maharashtra 400076, India}
\newcommand{\Quicst}{Centre of Excellence in Quantum Information, Computation, Science and Technology,
Indian Institute of Technology Bombay, Powai, Mumbai, Maharashtra 400076, India}

\begin{document}

\title{Efficient Pauli-decomposition and multistage state-refinement for tensor network based differential equation solver}
\author{Vishwabhushan Suresh Gholap} \affiliation{\IITB}

\author{Himadri Shekhar Dhar} \affiliation{\IITB}\affiliation{\Quicst}

\author{Siddhartha Santra} \affiliation{\IITB}\affiliation{\Quicst}

\date{\today}% It is always \today, today,
             %  but any date may be explicitly specified

\begin{abstract}

\label{abstract}
Classical numerical techniques for solving partial differential equations become computationally expensive as the dimension of the discretized matrix representation of the underlying differential operator increases. For differential operators that give rise to a Sturm--Liouville problem (finding the eigensystem of a linear, second-order, Hermitian differential operator in one-dimension), tensor network based methods can be highly productive. The primary advantage lies in the fact that an operator of dimension $N\times N$ can be efficiently represented via a matrix product operator (MPO) using only  $n=\log_2{(N)}$ qubits, which opens up the possibility of finding the eigenvalues and eigenvectors using methods such as imaginary time evolution. However, in practice, this can be computationally challenging. First, most known methods to generate MPOs of large operators without a manifest tensor product structure, require prohibitively large memory, and second, the number of Trotterization steps necessary for convergence in conventional imaginary time evolution increases rapidly with $n$.  
In this work, we present techniques to mitigate these two challenges for certain sparse and structured differential operators. To address the first challenge, we construct the MPO corresponding to the differential operator by efficiently expanding it in the basis of Pauli-operator strings, followed by iterative compression into a MPO with the desired bond dimension. The efficiency of this step is enabled by an analytical expression we derive for the Pauli basis coefficients by exploiting their binary encoding, reducing the memory requirement from $\mathcal{O}(2^{n+1})$ to $\mathcal{O}(2n)$. To address the second challenge, we propose a multistage state-refinement heuristic that accelerates the convergence of the imaginary time evolution, reducing the time to convergence by up to two orders of magnitude already for systems with as few as ten qubits. Using the improved tensor-network framework, we compute the first 32 eigenstates of a Laplacian with a dimension larger than $10^6$ with fidelity above $0.95$, using a $20$ qubit MPO. Furthermore, we validate our method on the two-dimensional anharmonic oscillator and investigate its performance for disordered systems, where increasing random-potential strength degrades accuracy and ultimately limits the applicability of the approach.
\end{abstract}

\maketitle

\section{Introduction}
\label{sec:intro}
Partial differential equations play a central role in modeling systems and dynamics across all fields of science and engineering. However, analytical techniques for solving PDEs such as separation of variables, Green's function techniques and integral transforms yield closed form solutions only for restricted geometries and boundary conditions, necessitating the use of numerical techniques, such as finite element methods, for more complex systems. Since these numerical techniques discretize the solution domain into a finite set of grid points, finite-element methods typically have computational costs that scale at least linearly with the number of grid points. As the grid becomes finer and the problem size grows, for certain classes of PDEs, tensor network (TN) methods \cite{White1992,White1993,Schollwock2011,Orus2014,Verstraete2008,Cirac2021,Orus2019,Garcia2024} can provide a more efficient alternative approach to conventional numerical techniques by exploiting underlying low-rank structure in the solution \cite{Eisert2010,Hastings2007,Vidal2003,Vidal2004}.

The tensor network approach is particularly well suited to linear, second-order PDEs that, upon separation of variables, yield a set of Sturm–Liouville ordinary differential equations (ODEs). In this case, the differential operators associated with the ODEs together with the boundary conditions define linear, second-order, Hermitian differential operators that can be discretized and mapped onto the dynamics of an appropriate quantum system~\cite{Costa2019,Jin2023}. Specifically, the discretized ODE can be recast as a one-dimensional quantum lattice model, with the solution vector encoded as a matrix product state (MPS) and the corresponding operator represented as a matrix product operator (MPO) \cite{Oseledets2011,OseledetsDolgov2012,Khoromskij2011,Kazeev2013,Dolgov2014,DolgovEig2014,Holtz2012,Boelens2018}. The resulting tensor network representation enables the application of algorithms such as the density matrix renormalization group (DMRG)~\cite{White1992, White1993,Schollwock2011}, time-evolving block decimation (TEBD)~\cite{Vidal2003}, or imaginary-time evolution techniques~\cite{Daley2004,Vidal2007,Haegeman2011,Paeckel2019} 
to determine the eigenvalues and eigenfunctions of the differential operator which can provide a complete basis for representing the solutions to the ODE for the given boundary conditions.

The efficacy of the tensor network approach for computing the eigensystem of a discretized differential operator depends on both the computational cost of constructing an efficient MPO representation of the operator and the cost of determining its eigensystem using a suitable tensor network algorithm, which in the present work is imaginary time evolution. However, there are challenges in the efficiency of both these steps. First, arbitrary discretized differential operators do not generally possess manifestly compact low-rank MPO representations. As a result, the construction of an MPO can itself become a significant space- and time-complexity bottleneck, since the large matrix associated with a fine-grid discretization must typically be formed and stored explicitly prior to its compression into MPO form. In addition, the compression procedure itself requires repeated singular-value decompositions of large matrices to obtain the MPO representation, further increasing the computational cost. Second, imaginary-time evolution, which relies on iteratively applying Suzuki–Trotter decomposed form of the exponentiated MPO, may require a large number of Trotter steps for the large matrices associated with fine-grid discretizations, potentially slowing convergence to the target eigenstates.

In this work, we present techniques to mitigate both of these challenges, thereby potentially improving the performance of tensor network methods for solving PDEs, particularly when the discretized differential operators are sparse and structured. First, we exploit the observation that many operators of practical interest admit a direct MPO construction through a Pauli-basis decomposition, avoiding the need to explicitly construct and store the full operator matrix. Moreover, this decomposition naturally enables the application of Suzuki–Trotter expansions, which are central to time evolution algorithms used to compute the operator's eigenstates.

The use of a Pauli-basis decomposition, however, introduces its own challenges. In particular, the number of Pauli strings appearing in the expansion can be as large as $4^n$ for an operator with no underlying symmetries, and this exponential scaling is, in general, unavoidable. Nevertheless, as we show, the evaluation of the corresponding expansion coefficients can be made highly efficient. Pauli strings are naturally represented as tensor products rather than explicit matrices. Existing approaches typically rely on sparse-array representations and iterative procedures to construct the matrix representation of each Pauli string, followed by matrix multiplication with the operator and evaluation of the resulting trace to obtain the corresponding expansion coefficient \cite{Koska2024}. To avoid both the memory-intensive construction of Pauli-string matrices and the computationally expensive matrix multiplications, we derive a closed-form analytical expression for the matrix elements of arbitrary Pauli strings by exploiting their binary encoding \cite{gottesman1997stabilizercodesquantumerror}. This allows the matrix elements of Pauli strings to be evaluated directly, without explicitly constructing their matrix representations. The resulting expressions enable the efficient evaluation of traces between operators with analytically known matrix elements and arbitrary Pauli strings. By eliminating both the iterative construction of Pauli-string matrices and the associated large-scale matrix multiplications, the computational cost of determining each expansion coefficient is substantially reduced, leading to a significantly more efficient MPO construction procedure for the discretized differential operator.

Second, to mitigate the challenge from the growth in the number of Suzuki–Trotter steps required to achieve convergence as the grid resolution increases, we introduce a multistage approach. First, imaginary-time evolution is performed on a coarse grid to obtain approximations to the ground state and several low-lying excited states. These states are then interpolated onto a finer grid and used to initialize a second imaginary-time evolution calculation. This works because the low-lying eigenstates of differential operators are typically smooth and do not oscillate too fast, implying that interpolated coarse-grid eigenstates provide good approximations to the corresponding fine-grid eigenstates. Consequently, imaginary-time evolution on the finer grid is initialized much closer to the target state, substantially reducing the computational cost. Thus multistage state-refinement heuristic significantly accelerates convergence and reduces the number of Trotter steps required on finer discretizations.

To test our improvised methods, we first considered the Laplacian as the differential operator, with boundary conditions permitting analytical solutions which were used to benchmark the eigenstates obtained via our method. For operators with dimension larger than $10^6$, our method can reproduce the first 32 eigenstates with fidelity greater than 0.95, using an MPO constructed using $20$ qubits. We then applied the method to systems without analytical solutions, such as the 2D anharmonic oscillator which gives rise to two one-dimensional Sturm–Liouville problems.Using an appropriately coarsened discretization to obtain a matrix representation of manageable size, we validated our method by comparing its results with the exact diagonalization of the corresponding discretized operator. Finally, we investigated disordered systems, where the differential operator is given by the sum of the Laplacian and a position-dependent random potential. We observed that as the strength of the potential increases, the fidelity of the solution vectors decreases and the error in the eigenvalues increases. Beyond a certain potential strength, the method ceases to work for the random potential. Tensor network methods are effective when the system possesses symmetries and low entanglement. Disordered or random potentials increase entanglement, making singular value decomposition truncation ineffective and increasing the required bond dimension.

Recent progress in quantum and quantum-inspired PDE solvers includes quantum linear system algorithms (e.g., HHL), high-order methods for differential equations, Hamiltonian simulation, and operator-decomposition techniques \cite{Harrow2009,Childs2017,Berry2014,Childs2020,Farghadan2025,Arseniev2024,Koska2024}, 
as well as variational and hybrid quantum-classical approaches for linear and nonlinear problems and eigenstate preparation \cite{Peruzzo2014,McClean2016,Cerezo2021,Lubasch2020,McArdle2019,Motta2020}. 
At the same time, tensor network methods have emerged as a powerful classical framework for large-scale matrix manipulations \cite{Oseledets2011,OseledetsDolgov2012,Khoromskij2011,Kazeev2013,Dolgov2014,Holtz2012,Boelens2018,Bachmayr2023} which exploit low-rank structures and have been successfully applied to high-dimensional PDEs using alternating least-squares optimization, DMRG-inspired methods, and related tensor network techniques \cite{Dolgov2014,Schollwock2011,Paeckel2019}.

The link between quantum algorithms and tensor networks is increasingly clear: tensor networks can act as both classical simulators of quantum systems and quantum-inspired solvers exploiting entanglement structure, variational forms, and operator factorizations \cite{Orus2014,Cirac2021,Orus2019,Garcia2024,Verstraete2008}, implying that some advantages of quantum PDE solvers can already be achieved classically for low-entanglement states and operators \cite{Eisert2010,Hastings2007}.

The practical utility of the methods proposed in this work depends on the sparsity and structure of the discretized operator. In particular, the MPO creation approach using a Pauli-basis decomposition works when the operator can be decomposed into a polynomial number of Pauli strings while the multistage state-refinement works when the eigenstates are not too oscillatory. 
As such the contribution of the work is not necessarily in achieving overall superior performance over other techniques for solving PDEs, rather it is to highlight efficient primitives, which may be of independent interest, that can accelerate Pauli decomposition and imaginary time evolution for a wide class of sparse operators including differential operator problems.
 
The paper is arranged as follows. A detailed explanation of our method is presented in the subsections of Sec.~\ref{sec:formalism_and_methods}. Subsection~\ref{sec:tn} gives a basic introduction to tensor network methods, \ref{sec:pauli} explores Pauli-basis decomposition using binary representation,  \ref{sec:tridiag} specializes Pauli-basis decomposition for a tridiagonal matrices, in \ref{sec:ODE} we discuss the creation of matrix product operator and Suzuki--Trotter expansion to implement imaginary time evolution and in \ref{sec:Multistage state refinement} we discuss the multistage state-refinement heuristic.
In Sec.~\ref{sec:results}, we test our method on separable partial differential equations which reduce to Sturm--Liouville problems. We first benchmark on the one-dimensional diffusion equation by computing eigenstates of the Laplacian operator \ref{res:diff}. We then apply our method to a two-dimensional anharmonic oscillator \ref{res:anho}. Finally, we examine one-dimensional systems with quadratic, quartic, and random potentials to analyse fidelity and relative energy error variations with change in bond dimension and potential strength \ref{res:random}. We also present a way to estimate maximum error in the construction of matrix product operator representations \ref{res:err}. We end with a conclusion in Sec.~\ref{sec:conclusion}.

\section{Formalism and methods}
\label{sec:formalism_and_methods}

\subsection{Tensor network representation}
\label{sec:tn}

Tensor networks are structured representations of complex quantum states and operators in terms of a simple collection or network of connected tensors, which allows for
efficient storage and manipulation. 
For example, consider an $n$-qubit quantum state
\begin{equation*}
    \ket{\psi} =
    \sum_{i_1,\dots,i_n=0}^1
    \psi_{i_1 \dots i_n}
    \ket{i_1 \dots i_n},
\end{equation*} 
{where the coefficient tensor
$\psi_{i_1 \dots i_n}$ 
contains $2^n$ complex entries. 
For large $n$ storing it becomes infeasible.} 
The central idea of a tensor network is to factorize this large tensor into a sequence of smaller tensors connected by internal indices. The dimensions of these internal indices,
called bond dimensions, are governed by the entanglement structure of the state and determine the efficiency of the representation~\cite{Orus2014,Orus2019}. Whenever the tensors admit approximate low-rank factorizations across successive bipartitions, such a representation becomes efficient.
{For instance, using successive singular value decomposition (SVD) the tensor $\psi_{i_1 \dots i_n}$ can be represented as a network of $n$ tensors
to form a matrix product state (MPS)~\cite{Vidal2003, Schollwock2011, Orus2014}} 
\begin{equation*}
    \psi_{i_1 \dots i_n}
    =
    \sum_{\zeta_1,\dots,\zeta_{n-1}}
    A^{[1]}_{i_1,\zeta_1}
    A^{[2]}_{\zeta_1,i_2,\zeta_2}
    \cdots
    A^{[n]}_{\zeta_{n-1},i_n}.
\end{equation*}
Here $i_k$ are physical indices of local dimension $2$, and $\zeta_k$ are bond indices with dimension $\chi_k$. 
The bond dimensions determine both the storage cost and the representational complexity across bipartitions, and states that admit low bond dimensions $\chi_k$ possess efficient MPS representations.
Figure~\ref{fig:ttsvd} represents an illustration of the MPS construction using successive SVD across all bipartitions.
\begin{figure}[h]
    \centering
    \includegraphics[width=0.8\linewidth]{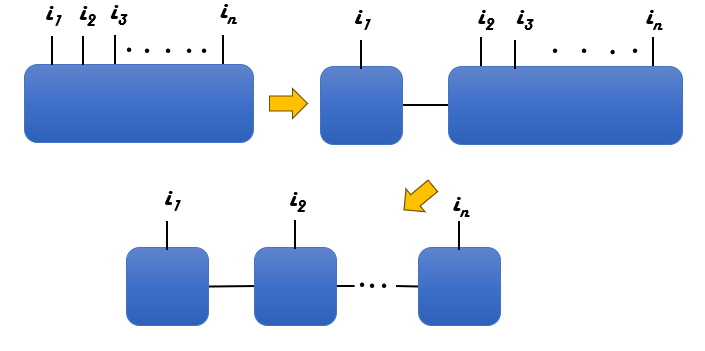}
    \caption{Successive SVD truncations used to obtain the MPS representation of the coefficient tensor. Only the first truncation step is shown explicitly; the downward arrow shows repetition of same decomposition for each remaining sites.}
%Illustration of successive SVD truncation for converting a high dimensional tensor into a matrix product state (MPS).
    \label{fig:ttsvd}
\end{figure}
Similar construction can be applied to an 
operator
$\hat{O}$ acting on $n$ qubits with coefficient {tensor}
$O_{i_1\dots i_n,\, j_1\dots j_n}$,
to obtain a 
matrix product operator (MPO)
\begin{equation*}
    O_{i_1\dots i_n,\, j_1\dots j_n}
    =
    \sum_{\zeta_1,\dots,\zeta_{n-1}}
    W^{[1]}_{i_1 j_1,\zeta_1}
    W^{[2]}_{\zeta_1,i_2 j_2,\zeta_2}
    \cdots
    W^{[n]}_{\zeta_{n-1},i_n j_n}.
\end{equation*}
In practice, directly applying this procedure to the coefficient tensors that are
exponentially large in $n$ 
is computationally prohibitive. 
Instead, the local tensors are built iteratively depending on the specific problem being addressed. For instance, in many-body physics computational approaches such as density matrix renormalization group (DMRG) or time-evolving block decimation (TEBD) are employed.  
In sec.~\ref{sec:ODE} we present a method for the construction of MPOs for differential operators, without explicitly storing the initial {coefficient} 
tensor.

\subsection{Pauli basis expansion}
\label{sec:pauli}

A key step in the efficient construction of an MPO of the differential operator is the Pauli basis expansion
Incidentally, the decomposition is also integral to the implementation of the imaginary time evolution method to find the low excitation states using Suzuki--Trotter approximation.

An $n$ qubit operator $\hat{O}$ can be written
in terms of Pauli strings $P_n^{(\ell)}$:
\begin{equation}
    \hat{O} = \sum_{\ell=1}^{4^n} c_\ell \, P_n^{(\ell)},~\textrm{where}~ P_n^{(\ell)} \in \mathcal{P}_n = \{p_1 \otimes p_2 \otimes \cdots\ \otimes p_n\}.
    \label{op}
\end{equation}
Here $p_k \in \{I, X, Y, Z\}$ represents one of the Pauli matrices and $\mathcal{P}_n$ is the Pauli basis, with
$4^n$ Pauli strings $P_n^{(\ell)}$.
To decompose an operator in the basis $\mathcal{P}_n$, the matrix elements of each $P_n^{(\ell)}$ needs to be efficiently computed. See Sec.~\ref{Appendix:definitions and notation} in Appendix for more details.

Previous work~\cite{Koska2024} 
use sparse arrays to iteratively generate {the matrix elements of Pauli string $P_n$ (dropping the superscript $ \ell$)} in $n$ steps. 
One of the primary
contribution of our work is to formulate
an analytical expression
to obtain the matrix elements of $P_n$ using binary encoding.
To derive the matrix elements for a given Pauli string $P_n$, we define two $n$-bit binary 
{numbers} $\alpha(P_n)$ and $\beta(P_n)$. 
Each row of the matrix $P_n$ has only a single non-zero element, and the binary number
$\alpha(P_n)$ captures the position (column number) of this unique non-zero element, while $\beta(P_n)$ encodes its value. 

\begin{definition} 
% [Structure number $\alpha$]
For any $P_n\in\mathcal{P}_n$, with $P_n=p_1\otimes p_2\otimes \cdots \otimes p_n$, we have a corresponding $n$-bit binary number $\alpha(P_n)$, called the structure number, defined as
\[\alpha (P_n)=\alpha (p_1) \alpha (p_2) \cdots \alpha (p_n) = \alpha_1 \alpha_2 \cdots \alpha_n,\]
with $\alpha (I)=\alpha (Z)=0$ and $\alpha (X)=\alpha (Y)=1$, such that $\alpha_k \in \{0,1\}$.
\end{definition}
\noindent If $j = j_1j_2 \cdots j_n$ is the binary number that denotes the row, then the column number $k=k_1k_2\cdots k_n$ of the non-zero element in the 
$j^{th}$ row
is given by $k=\alpha \oplus j$, where $\oplus$ is bit wise XOR. The observation has been formally stated and proved in the following theorem.

\begin{theorem}[Structure Theorem]
For any $P_n\in\mathcal{P}_n$ (Pauli basis), with $P_n=p_1\otimes p_2\otimes \cdots \otimes p_n$ and corresponding structure number $\alpha$, 
{the column number} of the non-zero element in the $j^{th}$ row is $\alpha \oplus j.$
\end{theorem}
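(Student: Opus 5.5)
The plan is to use the tensor-product structure of $P_n$ together with the binary indexing of rows and columns. Write the row index $j=j_1j_2\cdots j_n$ and the column index $k=k_1k_2\cdots k_n$ in binary, with $j_1$ the most significant bit. This ordering is consistent with the Kronecker product convention, in which the first factor $p_1$ acts on the most significant qubit. The matrix elements then factorize as
\begin{equation*}
(P_n)_{j,k}=\bra{j_1\cdots j_n}p_1\otimes\cdots\otimes p_n\ket{k_1\cdots k_n}=\prod_{m=1}^{n}(p_m)_{j_m,k_m}.
\end{equation*}
This identity is the standard definition of the Kronecker product. I would state it once, or verify it by induction on $n$ using $(A\otimes B)_{(j_1 j'),(k_1 k')}=A_{j_1k_1}B_{j'k'}$, where $j'$ and $k'$ denote the remaining $n-1$ bits.

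The second step is a single-qubit lemma: for each $p\in\{I,X,Y,Z\}$, the element $(p)_{a,b}$ is nonzero if and only if $b=a\oplus\alpha(p)$. This is checked directly from the four $2\times 2$ matrices. $I$ and $Z$ are diagonal, with nonzero entries $\pm1$, so $b=a=a\oplus 0$. $X$ and $Y$ are purely off-diagonal, with nonzero entries $1$ and $\pm i$, so $b=1-a=a\oplus 1$. In every case each row contains exactly one nonzero entry, at column $a\oplus\alpha(p)$.

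Combining the two steps, the product $\prod_m (p_m)_{j_m,k_m}$ is nonzero if and only if every factor is nonzero, that is, if and only if $k_m=j_m\oplus\alpha_m$ for all $m$. This condition says exactly that $k=j\oplus\alpha$ as bitwise XOR. For this $k$ the entry is a product of nonzero numbers and hence nonzero. Every other column has at least one vanishing factor. Therefore row $j$ has a unique nonzero element, located in column $\alpha\oplus j$.

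The only real subtlety is the bookkeeping of the index convention, namely which bit corresponds to which tensor factor. I would state the big-endian convention explicitly so that the bit positions of $\alpha$ match $p_1,\dots,p_n$ in the order used in the Definition. Everything else is routine verification.
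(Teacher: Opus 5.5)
Your proof is correct, and it reaches the result more directly than the paper does.

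The paper inducts on $n$. It writes $P_{m+1}=P_1\otimes P_m$ and splits on $\alpha_0$. When $P_1\in\{I,Z\}$, $P_m$ fills the diagonal blocks, so the leading column bit equals the leading row bit. When $P_1\in\{X,Y\}$, $P_m$ fills the anti-diagonal blocks, so the leading bit is flipped. The lower bits come from the inductive hypothesis.

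You instead apply the entrywise Kronecker formula $(P_n)_{jk}=\prod_m (p_m)_{j_m k_m}$ to all $n$ factors at once. This reduces the claim to $n$ independent single-bit checks. The paper's block picture is the two-factor case of this formula used recursively, so the underlying content is the same.

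Your formulation gives an explicit ``if and only if'': the entry is nonzero exactly when $k=\alpha\oplus j$. Existence and uniqueness of the nonzero entry in each row therefore come out together. In the paper, uniqueness is carried only implicitly by the phrase ``the non-zero element'', which tacitly needs $c_1,c_2\neq0$. It is proved separately as Proposition~1 in the appendix.

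The same formula also gives the Value Theorem in one line. The factor $(p_m)_{j_m,\,j_m\oplus\alpha_m}$ equals $1$ for $I$ and $X$, $(-1)^{j_m}$ for $Z$, and $-i(-1)^{j_m}$ for $Y$. Their product is $(-i)^{\operatorname{K}(\alpha\land\beta)}(-1)^{\operatorname{\Pi}(\beta\land j)}$. This avoids the rescaling $Y'=iY$ and the second induction.

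The paper's inductive form has its own merit: it mirrors the recursive block construction of Pauli strings and gives a template it reuses for the Value Theorem.

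Stating the big-endian convention explicitly, as you do, is worthwhile. The paper relies on it silently when it takes $\alpha_0$ as the leading bit of $\alpha(P_1\otimes P_m)$.
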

\begin{proof}
We prove this theorem by induction. 
For $n=1$ we can exhaustively check that the Pauli matrices $\{I,Z,X,Y\}$ follow the structure theorem.
Now, assume that for $n=m$ the theorem 
holds i.e., {the column number} of the non-zero element in the $j^{th}$ row of $P_m$ is given as $\alpha \oplus j \equiv (\alpha_1\oplus j_1) \cdots (\alpha_m \oplus j_m)$.  
Any Pauli string of length $n=m+1$ can be formed by a tensor product of a Pauli matrix $P_1$ with the string 
$P_m$, such that $\alpha(P_{m+1}) = \alpha_0\alpha_1\alpha_2\cdots\alpha_m$.
Depending on $\alpha_0$, we have the following two cases:

\noindent \textbf{Case 1:} Assume $P_1$ has 
a structure number $\alpha_0=0$, such that $P_1 \in \{I,Z\}$. The resulting $m+1$-bit structure number of $P_{m+1}$ is $0\alpha_1 \alpha_2 \cdots \alpha_m$. Now, we can observe that in matrix $P_{m+1}$, the matrix $P_m$ will be placed in the diagonal,
resulting in the block diagonal structure: 
\[ 
\begin{bmatrix}
    c_1P_m & 0 \\ 0 & c_2P_m
\end{bmatrix},~
\text{where $c_1,c_2$ are constants.}
\]
For the upper block, the row numbers are $0j_1 j_2 \cdots j_n$ and 
the {column number} of the non-zero element in $P_{m+1}$ is 
$0(\alpha_1\oplus j_1) \cdots (\alpha_m \oplus j_m) = 0\alpha\oplus0j$. Similarly, for the lower block, the row numbers are $1j_1j_2 \cdots j_n$ and the column number is
$1(\alpha_1\oplus j_1) \cdots (\alpha_m \oplus j_m)=0\alpha\oplus1j$.
Therefore, the position of the non-zero element in the $j'^{th}$ row is $0\alpha \oplus j'$, where $0\alpha$ and $j'$ are $m+1$-bit numbers corresponding to the structure and row number of $P_{m+1}$.

\textbf{Case 2:} 
If $P_1$ has 
a structure number $\alpha_0=1$,
such that $P_1 \in \{X,Y\}$, the resulting $m+1$ bit structure number of $P_{m+1}$ 
is $1\alpha_1 \alpha_2 \cdots \alpha_m$. 
The matrix $P_m$ 
now appears in the anti-diagonal blocks, 
such that the matrix $P_{m+1}$ is
\[ 
\begin{bmatrix}
    0 & c_1P_m \\ c_2P_m & 0
\end{bmatrix},~
\text{where $c_1,c_2$ are constants.}
\]
In the upper half 
the {column numbers} of the non-zero elements 
is 
$1(\alpha_1\oplus j_1) \cdots (\alpha_m \oplus j_m)=1\alpha\oplus0j$ and in the lower half it is 
$0(\alpha_1\oplus j_1) \cdots (\alpha_m \oplus j_m)=1\alpha\oplus1j$.
Therefore, the position of the non-zero element in the $j'^{th}$ row is $1\alpha\oplus j'$, where $1\alpha$ and $j'$ are $m+1$-bit numbers corresponding to the structure and row number of $P_{m+1}$.\\

\noindent Hence, the theorem holds for $n=m+1$.
\end{proof}

The second $n$-bit string is the value number $\beta$, which 
is used to find the value of the non-zero element in the $j^{th}$ row and is defined as:

\begin{definition}
% [Value Number: $\beta$]
For any $P_n\in\mathcal{P}_n,\space P_n = p_1\otimes p_2\otimes \cdots \otimes p_n$, we have a corresponding $n$-bit binary number $\beta (P_n)$, called the value number, 
defined as
\[\beta (P_n)=\beta (p_1) \beta (p_2) \cdots \beta (p_n) = \beta_1 \beta_2  \cdots \beta_n,\]
with $\beta (I)=\beta (X)=0 \space \text{  and  } \space \beta (Z)=\beta (Y)=1$,
such that $\beta_k  \in \{0,1\}$.
\end{definition}
\noindent The following theorem proves how $\beta$ contributes to determining
the value of the non-zero element in the $j^{th}$ row of $P_m$. 

\begin{theorem}[Value Theorem]
For any $P_n\in\mathcal{P}_n$ (Pauli basis), with $P_n = p_1\otimes p_2\otimes \cdots \otimes p_n$ and value number $\beta$, the value of the non-zero element in the $j^{th}$ row is 
\[(-i)^{\operatorname{K}(\alpha \land \beta)} \times (-1)^{\operatorname{\Pi}(\beta\land j)},\]
where $\operatorname{K}$ counts the number of $1s$ in a binary number and $\operatorname{\Pi}$ gives the parity of a binary number.
\end{theorem}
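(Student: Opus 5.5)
We prove the Value Theorem by induction on $n$, in parallel with the proof of the Structure Theorem, so that the position of the non-zero element in row $j$ is already known to be $\alpha\oplus j$. The claim to establish is that the entry $(P_n)_{j,\alpha\oplus j}$ equals $(-i)^{\operatorname{K}(\alpha\land\beta)}(-1)^{\operatorname{\Pi}(\beta\land j)}$.

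For the base case $n=1$ we check the four Pauli matrices directly, with $j\in\{0,1\}$.
\begin{itemize}
\item For $I$: $\alpha=\beta=0$, so the formula gives $1$ in both rows.
\item For $Z$: $\alpha=0$, $\beta=1$, so the formula gives $(-1)^{j}$, i.e.\ entries $1,-1$.
\item For $X$: $\alpha=1$, $\beta=0$, so the formula gives $1$ in both rows.
\item For $Y$: $\alpha=\beta=1$, so the formula gives $(-i)(-1)^j$, i.e.\ $Y_{0,1}=-i$ and $Y_{1,0}=i$, matching $Y=\begin{bmatrix}0&-i\\ i&0\end{bmatrix}$.
\end{itemize}

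For the inductive step, write $P_{m+1}=p_0\otimes P_m$ with structure number $\alpha_0\alpha$, value number $\beta_0\beta$, and row index $j'=j_0 j$. By the definition of the Kronecker product, the entry of $P_{m+1}$ in row $j_0 j$ and column $(\alpha_0\oplus j_0)(\alpha\oplus j)$ factorizes as
\[
(p_0)_{j_0,\,\alpha_0\oplus j_0}\times (P_m)_{j,\,\alpha\oplus j}.
\]
This is exactly the block picture used in the Structure Theorem: the constants $c_1,c_2$ there are the non-zero entries of $p_0$ in rows $0$ and $1$. By the base case, the first factor is $(-i)^{\alpha_0\beta_0}(-1)^{\beta_0 j_0}$. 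By the induction hypothesis, the second factor is $(-i)^{\operatorname{K}(\alpha\land\beta)}(-1)^{\operatorname{\Pi}(\beta\land j)}$.

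It remains to combine exponents using two additivity identities under bit-string concatenation:
\[
\operatorname{K}\bigl((\alpha_0\alpha)\land(\beta_0\beta)\bigr)=\alpha_0\beta_0+\operatorname{K}(\alpha\land\beta),
\]
\[
\operatorname{\Pi}\bigl((\beta_0\beta)\land(j_0j)\bigr)\equiv\beta_0 j_0+\operatorname{\Pi}(\beta\land j)\pmod 2.
\]
The second identity holds modulo 2, which is all that matters since it appears as an exponent of $-1$. Multiplying the two factors then gives the formula for $n=m+1$, which closes the induction.

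The main point needing care is that the four one-qubit cases fit the single expression $(-i)^{\alpha\beta}(-1)^{\beta j}$. In particular, $Y$ must use $-i$ rather than $i$, which depends on the convention $Y_{0,1}=-i$. One must also keep the row index $j_0$ tied to the correct block, i.e.\ read $c_1,c_2$ in the correct order. Once the one-qubit table is verified, the rest is bookkeeping of exponents. Alternatively, the result can be obtained non-inductively by noting that the Kronecker product entry is the product of the $n$ single-qubit entries $\prod_k (-i)^{\alpha_k\beta_k}(-1)^{\beta_k j_k}$ and summing the exponents.
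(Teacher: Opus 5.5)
Your proof is correct and follows essentially the paper's route: induction on $n$, writing $P_{m+1}=p_0\otimes P_m$ and reading off the row-$j'$ entry from the block (Kronecker) structure, with the four single-qubit matrices as the base case. The only difference is bookkeeping. The paper first strips the phase globally by writing $Y=-iY'$ with $Y'=iY$ real, so that $P_n=(-i)^{\operatorname{K}(\alpha\land\beta)}P_n'$, and then inducts only on the sign $(-1)^{\operatorname{\Pi}(\beta\land j)}$ of $P_n'$; you instead carry the phase through the induction using additivity of $\operatorname{K}$ under concatenation.
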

\begin{proof}
First,  
the factor of $(-i)^{\operatorname{K}(\alpha \land \beta)}$ arises in the above expression, because $Y$ is replaced by $Y'=iY$ to get a modified Pauli string $P_n'$, 
such that
$P_n = (-i)^{n_Y}\times P_n'$, where $n_Y=\operatorname{K}(\alpha \land \beta)$ is the number of $Y$ operators in the string. The second term $(-1)^{\operatorname{\Pi}(\beta\land j)}$ gives the value of the non-zero element in $j^{th}$ row of $P_n'$. This can again be proved by induction.

For $n=1$, one can verify that the statement holds for
the matrices $\{I,Z,X,Y'\}$. 
Assume that the statement holds for $n=m$, i.e.
the value of non-zero element in the $j^{th}$ row of $P_m'$ is given as $(-1)^{\operatorname{\Pi}(\beta\land j)}$. 
For $n=m+1$, the modified $P'_{m+1}$ can be formed by a tensor product of 
the matrix $P'_1$ with 
$P'_m$. This gives rise to two cases. 

\textbf{Case 1:} Assume $P'_1$ has value number $\beta_0=0$, thus $P'_1 \in \{I,X\}$. 
The matrix $P'_{m+1}$ has the following structure:
\[ 
\left\{\begin{bmatrix}
    P'_m & 0 \\ 0 & P'_m
\end{bmatrix},
\begin{bmatrix}
    0 & P'_m \\ P'_m & 0
\end{bmatrix}\right\}
\]
The value of non-zero element in a block
is $(-1)^{\operatorname{\Pi}(\beta\land j)}$, which results in
\[
(-1)^{(0\land j_0) \oplus (\beta_1\land j_1) \oplus \cdots \oplus (\beta_m \land j_m)},
\]
with $j_0 = 0$ and $1$
for the upper and lower blocks in both the matrices.
Thus, the value of the non-zero element in $P'_{m+1}$ 
is given by $(-1)^{\operatorname{\Pi}(0\beta\land j')}$, where $j'=j_0j$ is the $m+1$-bit 
row number in the modified 
$P'_{m+1}$, with 
value number 
$\beta(P'_{m+1})=0\beta_1 \beta_2 \cdots \beta_m$, which is consistent with this case.

\textbf{Case 2:} Now, assume $P'_1$ has 
a value number $\beta_0=1$, such that $P'_1 \in \{Z,Y'\}$. 
The resulting matrix $P'_{m+1}$ is
\[ 
\left\{\begin{bmatrix}
    P_m & 0 \\ 0 & -P_m
\end{bmatrix},
\begin{bmatrix}
    0 & P_m \\ -P_m & 0
\end{bmatrix}\right\}
\]
The value of non-zero elements 
in the upper and lower block is given by
\[
\begin{array}{rcl}
(-1)^{0 \oplus (\beta_1\land j_1) \oplus \cdots \oplus (\beta_m \land j_m)} & = & (-1)^{(1\land 0)(\beta_1\land j_1) \oplus \cdots \oplus (\beta_m \land j_m)}  \\
     & = &  (-1)^{\operatorname{\Pi}(1\beta\land 0j)},~\textrm{and} \\
 (-1)^{1 \oplus (\beta_1\land j_1) \oplus \cdots \oplus (\beta_m \land j_m)}    
 & = &  (-1)^{(1\land 1)(\beta_1\land j_1) \oplus \cdots \oplus (\beta_m \land j_m)} \\
     & = & (-1)^{\operatorname{\Pi}(1\beta\land 1j)},
\end{array}
\]
respectively. 
Combining the results for the two
blocks, 
the value of non-zero element in the $j'^{th}$ row of 
$P'_{m+1}$  is given by $(-1)^{\operatorname{\Pi}(1\beta\land j')}$
where $j'=j_0j$.
The resulting $m+1$ bit value number $\beta(P'_{m+1})=1\beta_1 \beta_2 \cdots \beta_m$, which is consistent % 
with this case.\\

\noindent Thus, the theorem holds for $n=m+1$.
\end{proof}

We combine the result of the two theorems to get 
an analytical expression for the matrix elements of the Pauli strings $P_n$, in terms of the 
two $n$-bit binary numbers $\alpha$ and $\beta$:
\begin{equation}
P_{jk} =
\begin{cases}
(-i)^{\operatorname{K}(\alpha \land \beta)}(-1)^{\operatorname{\Pi}(\beta \land j)}, & k = \alpha \oplus j, \\
0, & \text{otherwise}.
\end{cases}
\label{eq:pauli_matrix_representation}
\end{equation}
Here, the row ($j$) and column ($k$) indices are $n$-bit binary numbers.
Using the above expression, the
coefficients $c_\ell$, corresponding to $P_n^{(\ell)}$ in the Pauli-basis expansion of any $n$-qubit operator $\hat{O}$ in Eq.~\eqref{op}, can be derived as (see Sec.~\ref{Appendix:Proofs for some propositions} in the Appendix)
\begin{equation}
    c_\ell=\frac{1}{2^n} \sum_{j=0}^{2^n -1}(-i)^{\operatorname{K}(\alpha \land \beta)} \times (-1)^{\operatorname{\Pi}(\beta\land j)}\times \hat{O}_{\alpha\oplus j,j}.
\label{eq:pauli_coefficient_equation}
\end{equation}
Therefore, the derived 
analytical expression in Eq.~\eqref{eq:pauli_coefficient_equation} yields the Pauli 
decomposition of 
an arbitrary $n$-qubit $(2^n \times 2^n)$ matrix, by reducing the entire computation to a direct summation over $2^n$ terms. This represents a significant simplification over existing methods: the iterative construction of Pauli strings, using $\mathcal{O}(2^{n+1})$ sparse array manipulations~\cite{Koska2024} is 
completely avoided and no binary equations are needed to be solved~\cite{Arseniev2024}. Moreover the present result holds for any $n$-qubit matrix without 
any restrictions in its property or symmetry. 
In the following section, the Pauli-basis expansion is applied
to tridiagonal matrices corresponding to the Sturm--Liouville differential operators.

\subsection{Differential equations and tridiagonal matrix decomposition}
\label{sec:tridiag}

Consider the Sturm--Liouville operator, given by
\begin{equation}
\mathcal{L} = \left[\frac{d}{dx}\left(q_0(x)\frac{d}{dx}\right) + q_1(x) \right] 
\end{equation}
{where $q_0(x)$ and $q_1(x)$ are real valued well behaved functions.} Upon discretization using a symmetric stencil, the differential operator is mapped to a tridiagonal matrix $L$, 
where the diagonal and off-diagonal entries 
have explicit functional representations in terms of $q_0(x)$ and $q_1(x)$. 
For a uniform grid with spacing $h$, the symmetric stencil yields
\begin{equation}
L =
\begin{pmatrix}
a_1 & b_1 & 0 & \cdots & 0 \\
c_1 & a_2 & b_2 & \cdots & 0 \\
0 & c_2 & a_3 & \cdots & 0 \\
\vdots & \vdots & \vdots & \ddots & b_{N-1} \\
0 & 0 & 0 & c_{N-1} & a_N
\end{pmatrix},
\end{equation}
\begin{align}
&a_i = -\frac{1}{h^2}\left[q_0\!\left(x_{i+\frac12}\right)+
q_0\!\left(x_{i-\frac12}\right)
\right] + q_1(x_i), \\
&b_i = c_i = \frac{1}{h^2}\, q_0\!\left(x_{i+\frac12}\right)
\end{align}
The derivation of the above expression is shown in Sec.~\ref{Appendix:Proofs for some propositions} of the Appendix.

It can be proved 
that Pauli strings having non-zero coefficients in the Pauli-basis expansion of tridiagonal matrix follow a specific structure, which can be represented by the structure number $\alpha$ as
$
\label{eq:structure_number}
\alpha =
\underbrace{00\cdots0}_{n-w}
\underbrace{11\cdots1}_{w},
$
where $n$ is the number of qubits and $w$ goes from $0$ to $n$. Also, the value number $\beta$ for a symmetric tridiagonal matrix is such that $\operatorname{K}(\alpha \land \beta)$ is even, which is the case for the discretized differential operator using the symmetric stencil. With this restriction on $\alpha$ and $\beta$, the coefficient computation can be optimized by separating the contributions of diagonal, upper-diagonal, and lower-diagonal terms into three finite sums. The detailed expressions and pseudo code are provided in Sec.~\ref{Appendix:Proofs for some propositions} and ~\ref{Appendix:Algorithms} in the Appendix.

The computational cost of Pauli-basis expansion is primarily determined by the number of Pauli strings having non-zero coefficient. For the tridiagonal matrix considered here, under the imposed constraints on $\alpha$ and $\beta$, the number of non-zero coefficients is $(n+1)2^n$.
\begin{table}[h]
    \centering
    \begin{tabular}{|c |c |c |}
        \hline
        {~~\textbf{Number of}~~} & {~~\textbf{Total time}~~} & 
        {~~\textbf{Time per coefficient}~~} \\
         {\textbf{qubits}}&
         {\textbf{(seconds)}}& {\textbf{(microseconds)}}\\
        \hline
        18 & 0.573 & 2.18 \\
        20 & 1.136 & 1.08 \\
        22 & 5.479 & 1.31 \\
        24 & 24.07 & 1.43 \\
        26 & 122.09 & 1.82 \\
        28 & 570.66 & 2.13 \\
        \hline
    \end{tabular} 
    \caption{Wall-clock time to obtain the Pauli-basis expansion of a Laplacian operator, using a 64~thread, 2.1~GHz processor with 128~GB memory.} 
    \label{tab:qubit_time}
\end{table}
Table~\ref{tab:qubit_time} reports the wall-clock time required to implement the Pauli-basis expansion of the Laplacian operator $(a_i=-2/h^2,\, b_i=c_i=1/h^2)$ for different qubit counts. Since the $n$-qubit Laplacian contains $2^n$ non-zero Pauli strings, the total execution time increases with the system size. We can see that the time per coefficient remains in microseconds for all considered qubit counts. Figure~\ref{fig:time_scaling} shows the wall-clock time scaling of the Pauli-basis expansion as a function of qubit number for the Laplacian (left column) and Legendre (right column) differential operators.
While the runtime scaling is similar across methods for Legendre operator, the key advantage of the present approach is scalability: it continues to operate beyond $n = 15$ qubits, a regime where dense matrix-based methods become memory-infeasible.
\begin{figure}[h]
    \centering \includegraphics[width=1\linewidth]{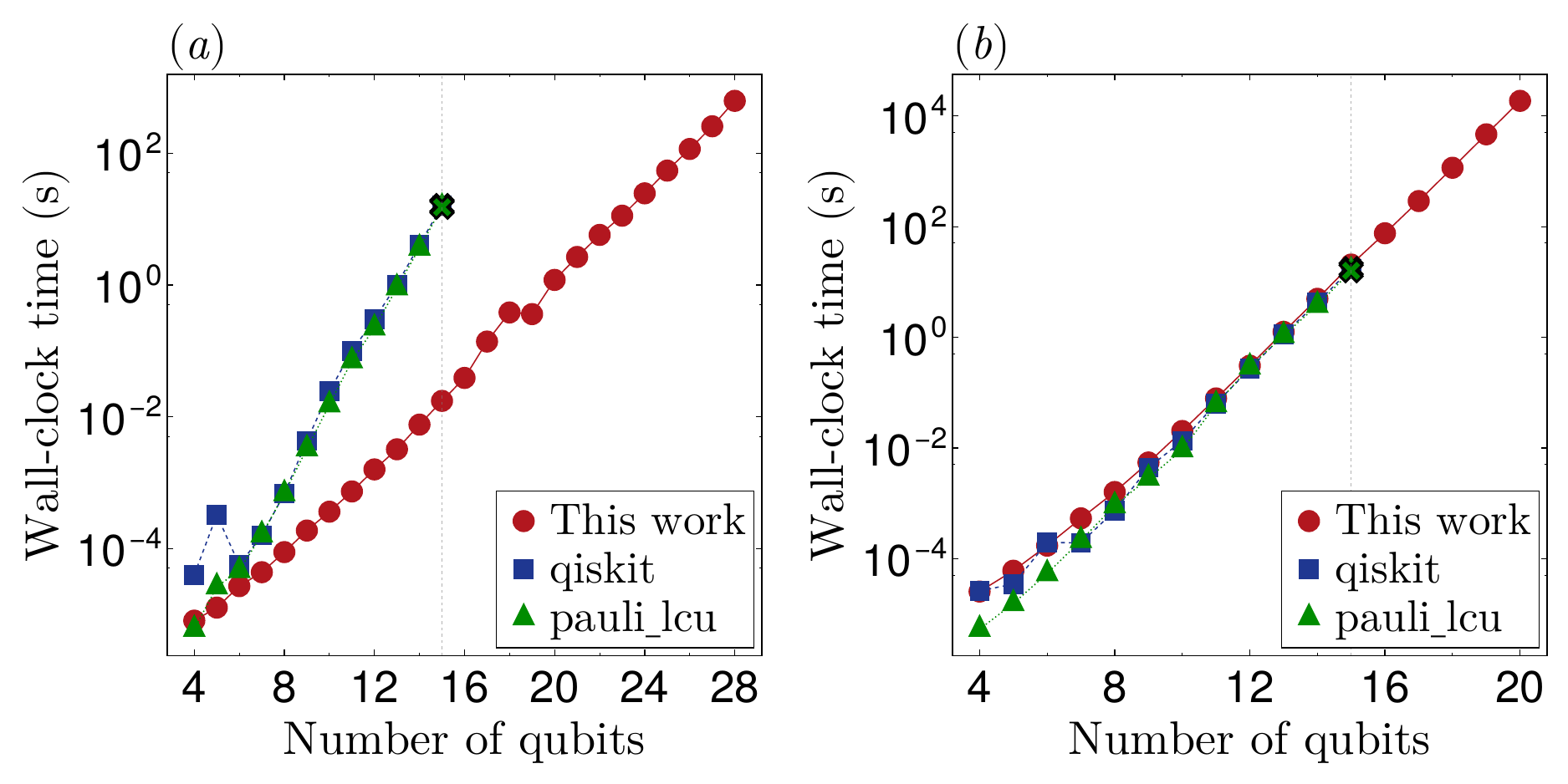}
    \caption{{Wall-clock time scaling of the Pauli-basis expansion as a function of the number of qubits $n$ for the (a) Laplacian and (b) Legendre operators. Red circles (solid line) denote the present approach, navy squares (dashed line) denote the Qiskit and green triangles (dotted line) denote pauli\_lcu~\cite{georges2025paulidecompositionfastwalshhadamard} implementations. The $\times$ marker indicates the point beyond which Qiskit and pauli\_lcu exhausts available memory ($n > 15$), while the present approach continues to function.}}
    \label{fig:time_scaling}
\end{figure}

\subsection{MPO creation and imaginary time evolution}
\label{sec:ODE}

The matrix product operator (MPO) can be constructed by exploiting
the Pauli-basis expansion of the operator 
$L$, thus avoiding the need to explicitly build and store the full coefficient tensor. For example the tridiagonal operators have $(n+1)2^n$ non-zero coefficients in the decomposition, which is considerably lesser than the size of Pauli basis ($4^n$).
Each Pauli string ($P_n$) is a tensor product of local Pauli matrices ($p_k$) and has a simple MPO representation:
\[
\operatorname{MPO}(P_n) = \sum_{\{\zeta_k\}}
W^{[1]}_{\zeta_1}
W^{[2]}_{\zeta_1,\zeta_2}
\cdots
W^{[n]}_{\zeta_{n-1}},~\textrm{where}~
W^{[k]} = p_k. 
\]
Note that the bond indices $\zeta_k$ takes only a single value (say $\zeta_k=1~ \forall k=1,\cdots ,n$). The MPO of the operator $L$ can then be constructed by summing over the Pauli strings MPOs, such that 
\begin{equation}
\operatorname{H_{MPO}}=\operatorname{MPO}(L) = \sum_{\ell} c_{\ell} \, \operatorname{MPO}(P_n^{(\ell)}).
\end{equation}
Now, MPO addition leads to linear increase in the bond dimension. To control the growth of bond dimension, the MPO is compressed via bond truncation after every fixed number of additions, discarding singular values below a prescribed threshold. The truncation error is defined via relative norm change ($\epsilon_{trunc} = || \operatorname{O}-\operatorname{O_{trunc}}||/||\operatorname{O}||$) before and after truncation. The truncation errors in our numerical results are reported in Sec.~\ref{res:err}. 
While the errors grow with system size and become non-negligible on finer grids, they 
remain below $10^{-4}$ for all system sizes considered. Notably, $\operatorname{H_{MPO}}$ is employed solely for coarse-grid eigenvalue extraction, where the truncation errors in operators are less than $10^{-4}$.
For operators with limited correlations, an efficient MPO representation with tractable bond dimension is achievable.

As Sturm--Liouville differential operators are Hermitian, they can be viewed as Hamiltonians of one-dimensional quantum spin systems. Expressing these operators in the Pauli basis establishes a direct connection with quantum lattice models. We consider, $H = \sum_j H_j = \sum_{\ell} c_{\ell} P_n^{(\ell)}$, where $H_j$ is the sum of mutually commuting Pauli strings, scaled with 
respective coefficient.
Such a mapping allows the use of well-studied  
techniques 
such as imaginary time evolution~\cite{McArdle2019, Motta2020} 
to find the ground state and the first few excited states of the Hamiltonian $H$, 
which corresponds to the first few eigenvalues and eigenfunctions of the Sturm--Liouville operator. For implementing imaginary time evolution, we construct the evolution operator using the first-order Suzuki--Trotter approximation. 

Importantly, Pauli strings sharing the same structure number \(\alpha\) and with equal parity of \(\operatorname{K}(\alpha \land \beta)\), commute with each other. This mutual commutativity within each $H_j$ allows $e^{- \Delta t H_j}$ to be computed 
with no Trotter error. 
\begin{equation}
    \operatorname{MPO}(e^{-\Delta t H_j}) = \prod_{P_n^{(\ell)} \in \mathcal{C}_j} e^{-c_{\ell} \, \Delta t \, \operatorname{MPO}(P_n^{(\ell)})}.
\end{equation}
{Here $C_j$ is the set of mutually commuting Pauli strings}. 
The self-inverse property of Pauli strings ensures that exponential of Pauli strings can be analytically computed as:
\begin{equation*}
 e^{a\Delta t  \operatorname{MPO}(P)}=\operatorname{cosh}(a\Delta t) \operatorname{MPO}(I) + \operatorname{sinh}(a\Delta t) \operatorname{MPO}(P).    
\end{equation*}
%%%
%  
Bring together the commuting terms, the first-order Suzuki--Trotter approximation~\cite{Suzuki1976} is used to construct 
the imaginary time evolution operator,
\begin{equation}
\operatorname{E_{MPO}}=\operatorname{MPO}(e^{-H \Delta t}) \approx \prod_{j=1}^{m} \operatorname{MPO}(e^{-H_j \, \Delta t}).
\end{equation}
The construction of $\operatorname{E_{MPO}}$ requires repeated MPO--MPO contractions. However, the bond dimension can be kept low ($\chi=20-50$), with errors close to machine precision for all considered system sizes.
The relative norm change incurred at each contraction is reported in 
Sec.~\ref{res:err}.
In practical implementation of this procedure we find that imaginary time evolution initialized from a random MPS requires a substantially larger number of time steps to converge. To accelerate convergence, we introduce a multistage state-refinement heuristic for constructing an improved initial state. This is described in the following subsection.

%%%%%%%%
\subsection{Multistage state-refinement: a map between Hilbert spaces}
\label{sec:Multistage state refinement}
The coarse-to-fine refinement introduced here is framed as a linear map $\hat{U}:\mathcal{H}_n\to\mathcal{H}_{n+n'}$ rather than a classical array operation for the following reason: since $\hat{U}$ is a linear map, it can be represented as an MPO, so that for large systems, where converting an MPS to a dense vector is infeasible, the interpolation can be carried out entirely within the tensor network framework as an MPO-MPS contraction. For the system sizes benchmarked in this work, direct classical interpolation suffices; the operator formulation ensures the method remains scalable beyond this regime.

Consider a discretization of an 
interval using $2^n$ grid points, with {functions $(f(x) = \sum_{l=0}^{2^n - 1} f(x_l) \, |l_{1} l_{2} \cdots l_n\rangle)$ defined on this interval approximated by vectors in}
\begin{equation*}
    \mathcal{H}_n \cong (\mathbb{C}^2)^{\otimes n}, 
    \qquad \dim(\mathcal{H}_n)=2^n,
\end{equation*}
where the computational basis $\{\ket{l}\}_{l=0}^{2^n-1}$ encodes grid points in binary. Refining the grid by inserting $2^{n'}$ intermediate 
points between each pair of neighbours introduces a local Hilbert space 
$\mathcal{H}_{n'} \cong (\mathbb{C}^2)^{\otimes n'}$, inflating the full 
space to 
$
    \mathcal{H}_{n+n'} \cong \mathcal{H}_n \otimes \mathcal{H}_{n'}.
$

Given a coarse-grid state $\ket{\psi} = \sum_\ell c_\ell \ket{\ell} \in 
\mathcal{H}_n$, we construct the refined amplitudes by linear interpolation 
between neighbouring coefficients,
\begin{equation*}
    c_k^{(\ell)} = \left(1-\frac{k}{2^{n'}-1}\right)c_\ell 
    + \frac{k}{2^{n'}-1}\,c_{\ell+1}, 
    \qquad k=0,\dots,2^{n'}-1.
\end{equation*}
This defines a linear refinement map $\hat{U}:\mathcal{H}_n\to\mathcal{H}_{n+n'}$,
\begin{align}
    \hat{U} 
    &= \sum_{\ell=0}^{2^n-1}\,\sum_{k=0}^{2^{n'}-1}
       \left(1-\frac{k}{2^{n'}-1}\right)
       \ket{k}\ket{\ell}\bra{\ell} \nonumber \\
    &\quad 
    + \sum_{\ell=0}^{2^n-2}\,\sum_{k=0}^{2^{n'}-1}
       \frac{k}{2^{n'}-1}
       \ket{k}\ket{\ell}\bra{\ell+1},
\end{align}
whose action yields the interpolated fine-grid state
\begin{equation}
    \ket{\psi'} 
    = \hat{U}\ket{\psi} 
    = \sum_{j=0}^{2^{n+n'}-1} c'_j\,\ket{j},
\end{equation}
where $c'_j=c_k^{(l)}$ with the composite index $j\equiv(\ell,k)$ running over all $n+n'$ qubits.

This scheme is effective because low-lying eigenstates of differential 
operators are smooth, so interpolated coarse-grid eigenstates lie close to 
the true fine-grid eigenstates. Consequently, imaginary-time evolution on 
the finer grid starts near convergence, substantially reducing computational 
cost. We benchmark this approach on Sturm--Liouville problems arising from 
several PDEs.

\section{Numerical Results}
\label{sec:results}

In this section, we look at some prototypical PDEs that essentially reduce to solving the Sturm--Liouville problem.
The first problem is to solve the 
diffusion equation, which has a well-known analytical solution. 
As such, the fidelity of the solutions obtained from the tensor network method with the analytical results serves as a direct benchmark of the proposed method.
Next, 
the method is applied to anharmonic oscillators, and finally, to one-dimensional systems with quadratic, quartic, and random potentials.
The success of the protocol is analyzed using the efficiency and the relative errors arising from the MPO 
construction, as potentials and bond-dimensions are varied. 
\subsection{The one-dimensional diffusion equation}
\label{res:diff}

Most diffusion processes in nature are described by a simple partial differential equation. In one-dimension, with
Dirichlet boundary conditions and a constant source, the PDE is given by
\begin{equation}
    \frac{\partial u(x,t)}{\partial t} = \kappa \frac{\partial^2 u(x,t)}{\partial x^2} + S;~u(-1,t) = u(1,t) = 0,
\end{equation}
where \(\kappa\) is the diffusion constant and $S$ is a constant source term. The analytical solution is expressed in terms of the eigenfunctions of the Dirichlet Laplacian:
\begin{equation}
\label{Eq:series for diffusion}
    u(t) = \sum_{i} \left[ \langle \phi_i | u_0 \rangle \, e^{-\lambda_i \kappa t } + \frac{S}{\kappa \lambda_i}\left(1 - e^{-\lambda_i \kappa t}\right)\langle \phi_i | 1 \rangle \right] \phi_i,
\end{equation}
where \(\phi_i\) and \(\lambda_i\) are the \(i\)-th eigenvector and eigenvalue of the Laplacian \(L\), respectively and \(\langle \phi_i | u_0 \rangle\) is the overlap of the initial state with the eigenvector \(\phi_i\).

Using imaginary time evolution
with tensor network on $n$ qubits, 
the fidelity of the first few computed eigenstates \(\phi_i^{\mathrm{TN}}\) 
and the corresponding analytical states
\(\phi_i^{\mathrm{exact}}\), is given by
\begin{equation}
    F_n = \frac{|\langle \phi_i^{\mathrm{exact}} | \phi_i^{\mathrm{TN}} \rangle|^2}{\langle \phi_i^{\mathrm{exact}} | \phi_i^{\mathrm{exact}} \rangle \, \langle \phi_i^{\mathrm{TN}} | \phi_i^{\mathrm{TN}} \rangle}.
\end{equation}
The computed values of $F_n$ for different number of qubits $n$, are shown in Fig.~\ref{fig:dirichlet_fidelity}.
We observe that the fidelities of the first 32 eigenstates are above 0.95. 
The fidelity close to unity implies that the numerically obtained eigenfunctions 
closely match the exact analytical solutions.
This confirms that the MPO and MPS representation are very efficient and
accurately approximate the low-lying modes of a differential operator such as the Laplacian. 
These modes are the most physically relevant ones, as they dominate the long-time behaviour of the diffusion process. Importantly, the diffusion equation serves as a strong benchmark for our tensor network based method.

\begin{figure}[h]
\centering
\includegraphics[width=0.9\columnwidth]{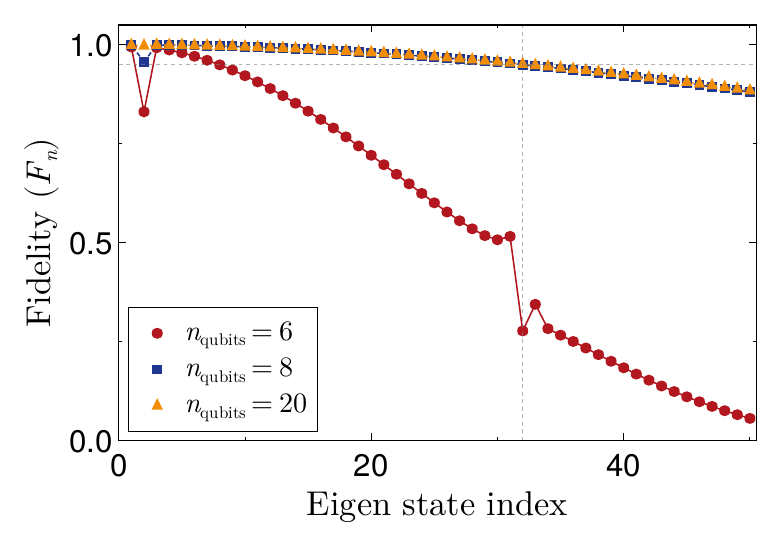}
\caption{Fidelity of first 50 eigenstate of the Dirichlet Laplacian on $[-1,1]$ obtained via multistage imaginary time evolution on 6, 8 and 20 qubits (see legend). For $20$ qubits, first 32 states have $F\geq0.95$ with $F=0.9511$ for state $32$.}
\label{fig:dirichlet_fidelity}
\end{figure}
\begin{figure}[h]
    \centering
    \includegraphics[width=\linewidth]{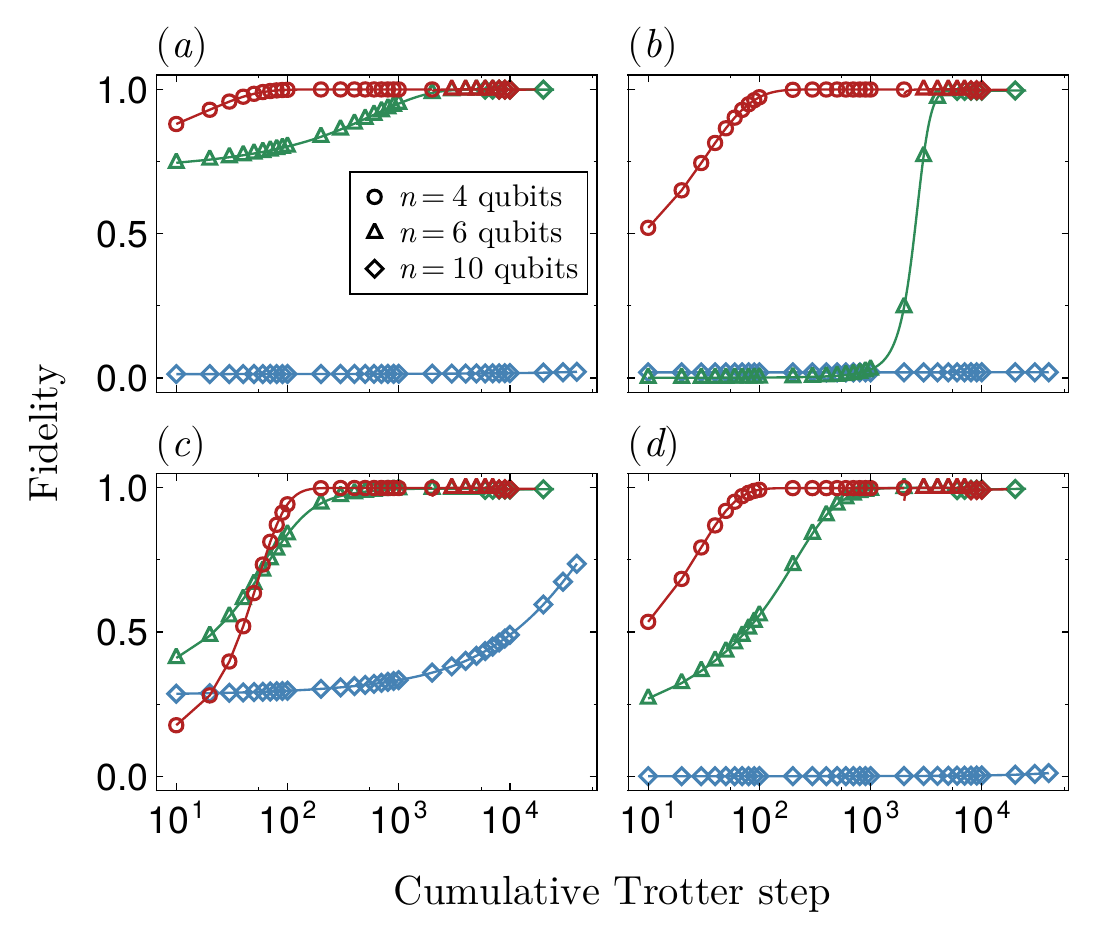}
    \caption{Fidelity vs.\ cumulative number of Trotter step for the (a) ground, (b) first, (c) second, and (d) third excited states of the Laplacian operator. Colors indicate the multistage state-refinement strategy ($n=4\!\to\!6\!\to\!10$ (red), $6\!\to\!10$ (green), or $10$ directly (blue)); markers indicate the qubit count $n$ (see legend, panel a). Trotter step size is kept constant at 0.1.}
    \label{fig:laplacian_fidelity_4panel}
\end{figure}
\begin{table}[h]
    \centering
    \begin{tabular}{|c |c |c |}
        \hline
        \textbf{~Number of~} & \textbf{~Time~} & 
        \textbf{~Number of Pauli Strings~}\\
        \textbf{Qubits} & 
        \textbf{~(seconds)~} & 
        \textbf{in Expansion}\\
        \hline
        8  & 24.02     & 256 \\
        10 & 34.20     & 1,024 \\
        12 & 68.54     & 4,096 \\
        14 & 228.59    & 16,384 \\
        16 & 849.60    & 65,536 \\
        18 & 3,603.44  & 262,144 \\
        20 & 15,899.96 & 1,048,576 \\
        \hline
    \end{tabular}%
    \caption{Wall-clock time for the creation of the MPO representation and the Suzuki--Trotter expansion, along with the corresponding number of Pauli strings, for the discretized Laplacian operator at different qubit counts, measured on a 64-thread, 2.1~GHz processor with 128~GB memory.}
    \label{tab:time_scaling}
\end{table}

Pauli 
decomposition 
of Laplacian differential operator with matrix size larger than 250 million, was efficiently implemented using up to 28 qubits (see Table~\ref{tab:qubit_time}).
For extraction of low-lying modes of the differential operators  using imaginary time evolution, which involves additional construction of Hamiltonian and evolution operator MPOs, the diffusion equation
with Laplacian matrix of size larger than 1 million was solved using about 20 qubits.
The total computational time, 
including the construction of the Hamiltonian MPO, the Suzuki–Trotter 
operator, and the
imaginary time evolution to obtain the ground state is reported in 
Table~\ref{tab:time_scaling}.
Fig.~\ref{fig:laplacian_fidelity_4panel} shows the fidelity between the evolved and exact eigenstates for the Laplacian operator's ground and first three excited states (panels a–d) as a function of cumulative Trotter step. Colors denote the three refinement strategies, and marker shapes indicate the qubit count at each stage. Multistage refinement converges faster than the other strategies, reducing the required number of Trotter steps, while direct evolution at $n=10$ qubits fails to converge even after $40,000$ steps.

\subsection{Two-dimensional anharmonic oscillator}
\label{res:anho}

The stationary Schr\"odinger equation for a two-dimensional anharmonic oscillator is given by
\begin{equation}
\left[
-\frac{\hbar^{2}}{2m}\nabla^{2}
+
V(x,y)
\right]\psi(x,y)
=
E\,\psi(x,y),
\end{equation}
The oscillator is defined on the square domain $\Omega = [-a,a]\times[-a,a]$, where $a=1$\AA, and with Dirichlet boundary conditions. The confining potential is chosen to be separable and anharmonic,
\begin{equation}
V(x,y)=\frac{1}{2} m\omega_x^2 x^2+\frac{1}{2} m\omega_y^2 y^2+
\lambda_x x^4+\lambda_y y^4,
\end{equation}
with parameters
$m\omega_x^2 = 0.1 m \omega_y ^2 =1 ~\mathrm{eV/ \AA^2}$ and
$\lambda_x = 0.2\lambda_y= 1~\mathrm{eV/\AA^4}$.
In this explicitly anharmonic regime, no closed-form analytical solution exists.

The differential operator is discretized on a uniform grid of 
$1024 \times 1024$, with a total of more than a million points. The 
grid spacing is $\Delta x = \Delta y = 2a/N$.
The one-dimensional second-derivative operators are
\begin{equation}
\left(D_{kk}\right)_{ij}
=
\frac{1}{\Delta k^{2}}
\left(
-2\,\delta_{i,j}
+
\delta_{i,j+1}
+
\delta_{i,j-1}
\right),
\end{equation}
where $k \in \{x,y\}$ is the spatial direction and $\delta_{ij}$ is the Kronecker delta function. 
The kinetic and potential energy operators are then given by $T_k = -\frac{\hbar^2}{2m} D_{kk}$ and $\left(V_k\right)_{ij} = \left(\frac{1}{2} m\omega_k^2 k_i^2 + \lambda_k k_i^4 \right)\delta_{ij}$, respectively, and the Hamiltonian is $H_k = T_k+V_k$.

As the potential in the two directions are separable,
the joint eigenstate of the total Hamiltonian is obtained as a tensor product of the two one-dimensional eigenstates.
The lowest eigenstates of the two-dimensional Hamiltonian are shown in Fig.~\ref{fig:ah_eigenstates}.

\begin{figure}[h]
\centering
\includegraphics[width=1\linewidth]{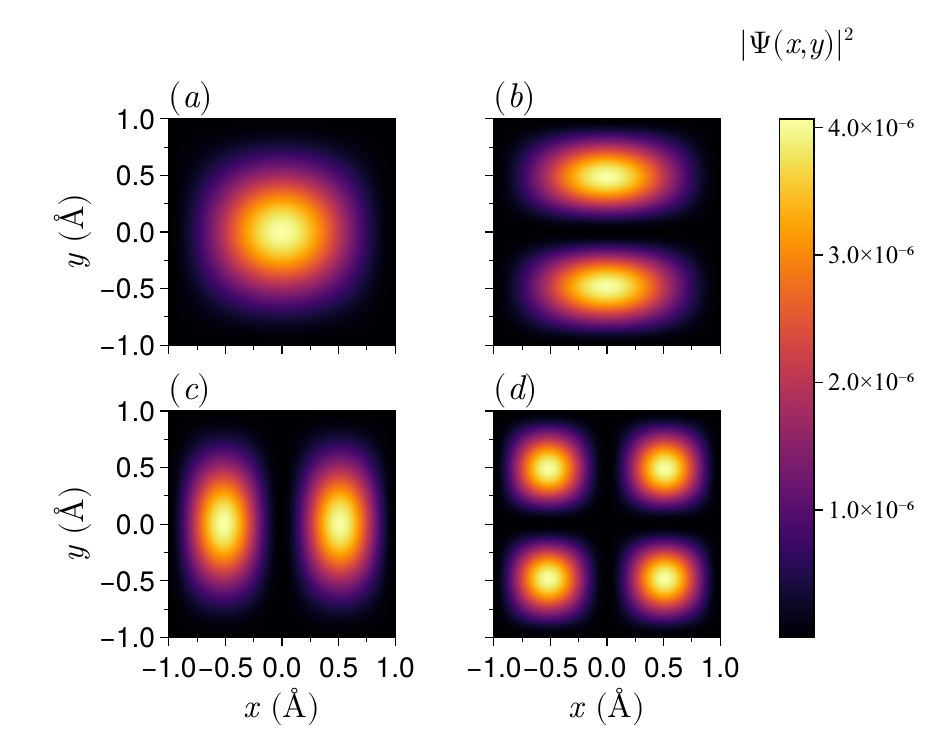}
\caption{Probability density $|\Psi|^2$ of the first four eigenstates of the separable $2D$ anharmonic oscillator confined in a box.$(a)$ Ground state, $(b)$ first, $(c)$ second and $(d)$ third excited states.}
\label{fig:ah_eigenstates}
\end{figure}

In the absence of analytical 
solutions, the eigenstates obtained by the tensor network method can be compared with exact diagonalization results for small systems. For example, we demonstrate results for $n=10$ qubits in each spatial direction. The tensor network method can now be benchmarked in terms of the fidelity $F_n$ with exactly computed eigenstates and relative error $\Delta E$ in  energies.
\begin{figure}[h]
    \centering
    \includegraphics[width=1\linewidth]{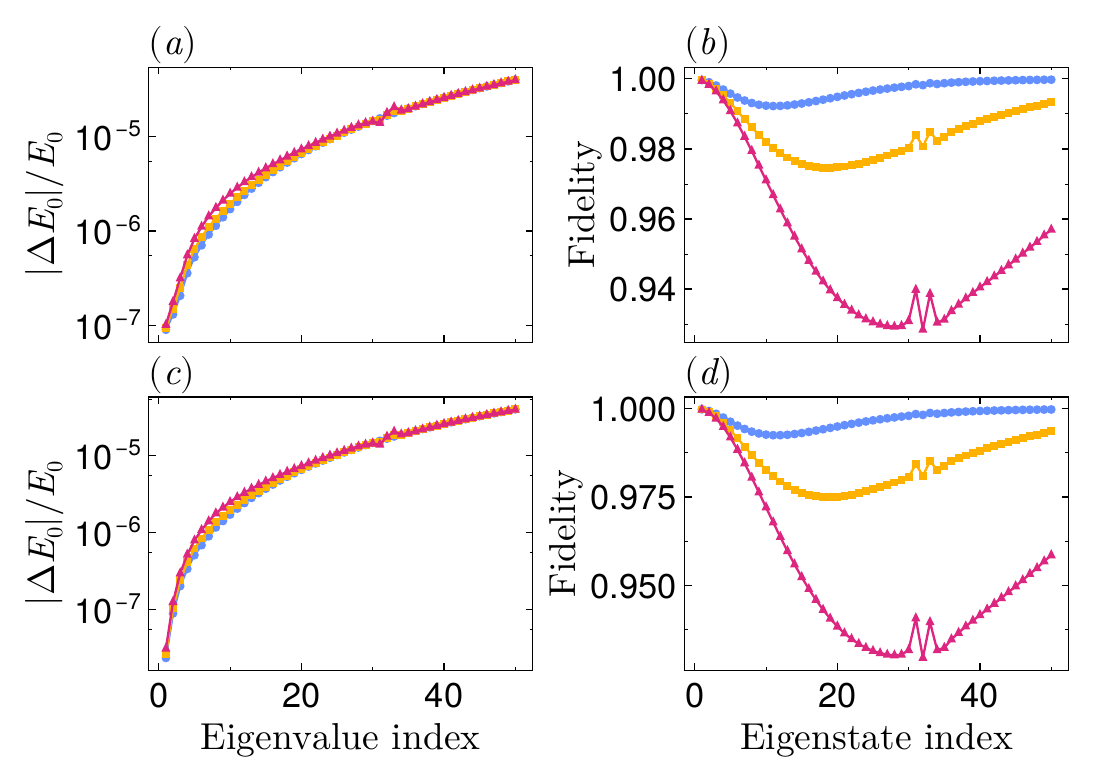}
    \caption{
    Relative eigenvalue error and fidelity of the first $50$ eigenstates of the separable $2D$ anharmonic oscillator. $(a)$ Relative energy error and $(b)$ fidelity for the $x$-Hamiltonian; $(c)$ relative energy error and $(d)$ fidelity for the $y$-Hamiltonian. Blue circles, yellow squares, and pink triangles correspond to $20\,000$, $10\,000$, and $5\,000$ Trotter steps, respectively.
    }
    \label{fig:aho_2d_errors}
\end{figure}

Figure~\ref{fig:aho_2d_errors} presents the relative 
errors in energy $\Delta E$ (left column) and the fidelities (right column), along the $x$ (top row) and $y$ (bottom row) spatial directions.
The results demonstrate strong agreement between the eigenstates and eigenvalues obtained using our method and those from exact diagonalization, with fidelity greater than \(0.99\) and relative error in energy less than \(10^{-4}\). The method can be readily applied to significantly larger grids and operators, with up to a million points in each spatial dimension, using around $20$ qubits.

\subsection{Random potentials in one-dimension}
\label{res:random}

To test the robustness of our 
tensor network method beyond the Laplacian operator and under varying potential strengths, we consider the Sturm--Liouville problem described by the one-dimensional Schr\"odinger equation
\begin{equation}
\left[-\frac{\hbar^2}{2m}\frac{d^2}{dx^2}+V(x)\right]\psi(x)=E\,\psi(x),
\label{Eq:Se}
\end{equation}
defined on the same spatial domain \(x\in[-a,a]\) with Dirichlet boundary conditions. We introduce three representative external potentials,
\begin{align}
V_{\mathrm{quad}}(x) & = g~x^2/a^2, \\
V_{\mathrm{quart}}(x) & = g~x^4/a^4,~\textrm{and}\\
V_{\mathrm{rand}}(x) &= g~W(x),
\end{align}
where $W(x)$ is drawn independently at each grid point from a standard normal distribution.
The discretized differential operator or the
Hamiltonian in Eq.~\eqref{Eq:Se} is discretized using 
a second-order finite-difference scheme, 
similar to the anharmonic oscillator case.
The grid size is taken to be $N=1024$,  
such that a reference solution is obtained via exact diagonalization of discretized Hamiltonian.
The accuracy of the tensor network based method is again benchmarked in terms of the fidelity and the relative eigenvalue error, when compared with the solution obtained using exact diagonalization.

Figure~\ref{fig:fidelity_bond} shows the fidelity error and
relative error in energy eigenvalues, when 
bond dimension \(\chi\) is varied, for the three different potentials.
The strength parameter 
is set to $g=2$~eV (solid line) and $g=10$~eV (dashed line). Figure~\ref{fig:fidelity_strength} show the variation of the errors with 
continuous changes in the potential strength $g$,
while keeping the bond dimension fixed at \(\chi=20\).

\begin{figure}[h]
\centering
\includegraphics[width=1\linewidth]{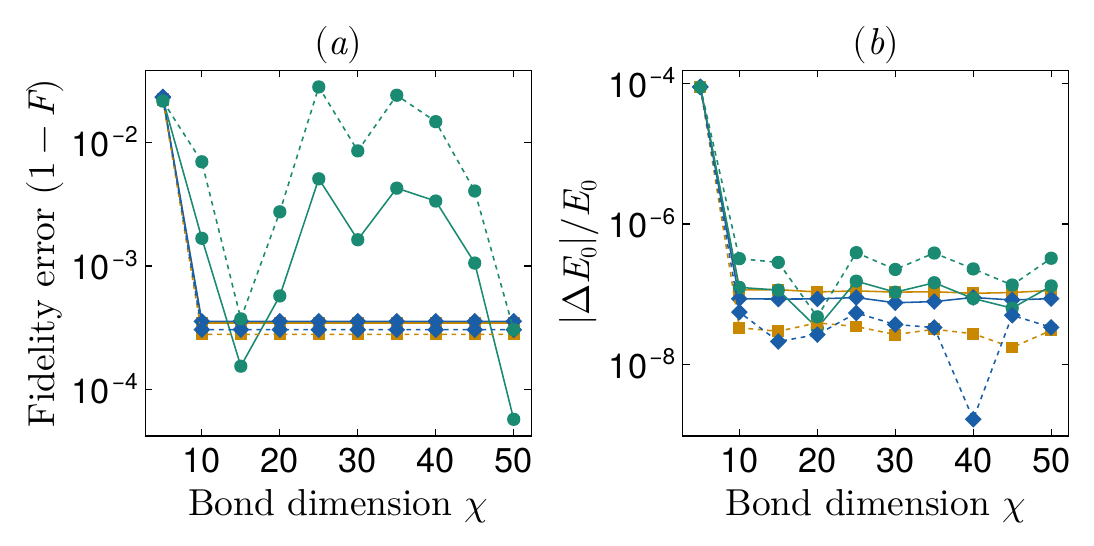}
\caption{(a) Fidelity error $1-F$ and (b) relative energy error of the ground state as a function of bond dimension $\chi$ for quadratic (yellow squares), quartic (blue diamonds), and random (green circles) potentials. Solid and dashed lines correspond to potential strengths of $2$ eV and $10$ eV, respectively.}
\label{fig:fidelity_bond}
\end{figure}
\begin{figure}[h]
\centering
\includegraphics[width=1\linewidth]{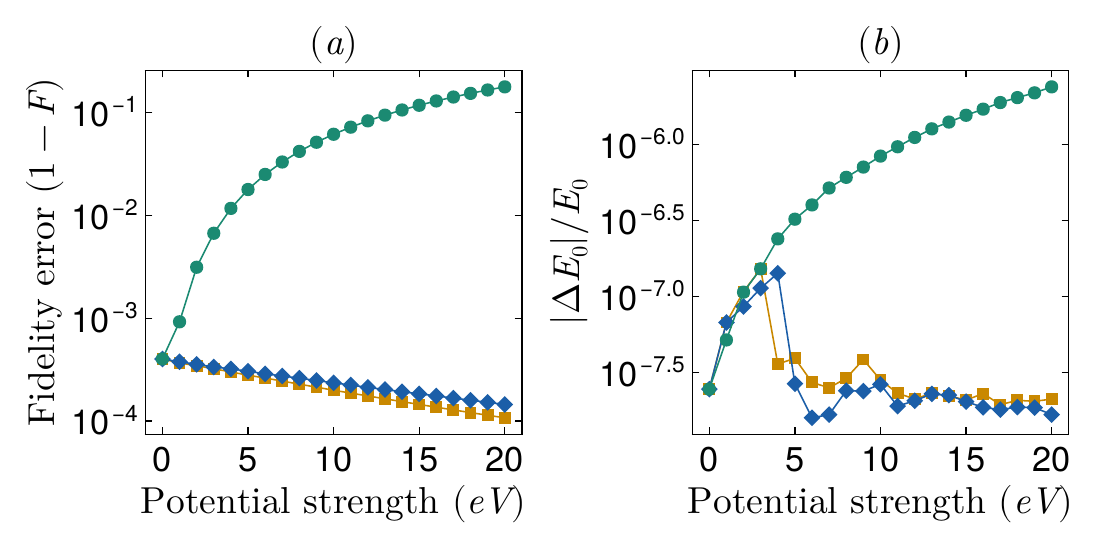}
\caption{(a) Fidelity error $1-F$ and (b) relative energy error of the ground state as a function of potential strength for quadratic (yellow squares), quartic (blue diamonds), and random (green circles) potentials at fixed bond dimension $\chi=20$.}
\label{fig:fidelity_strength}
\end{figure}
%%%
From the error plots of energy and fidelity as functions of bond dimension and potential strength, it is observed that increasing the bond dimension systematically reduces the error. 
Moreover, increasing the potential strength leads to larger errors for random potential. 
It is also noted
that the fidelity of the ground state for the random potential decreases where as that for quadratic or quartic potentials increases slightly. This behaviour indicates that the eigenstates of the random potential become more entangled as the strength of the disorder increases. A similar trend is evident in the relative energy error versus potential strength plot. This suggests that stronger disorder requires a more computationally expensive tensor network representation to accurately capture the corresponding states.

\subsection{Error analysis of MPO constructions}
\label{res:err}

To quantify the numerical accuracy of the constructed matrix product operators (MPOs), we analyse the truncation errors introduced during two stages of the algorithm: (i) construction of the MPO for a differential operator or Hamiltonian based on Pauli basis expansion, and (ii) construction of the evolution MPO using the Suzuki–Trotter approximation.
\begin{figure}[h]
    \centering
    \includegraphics[width=1\linewidth]{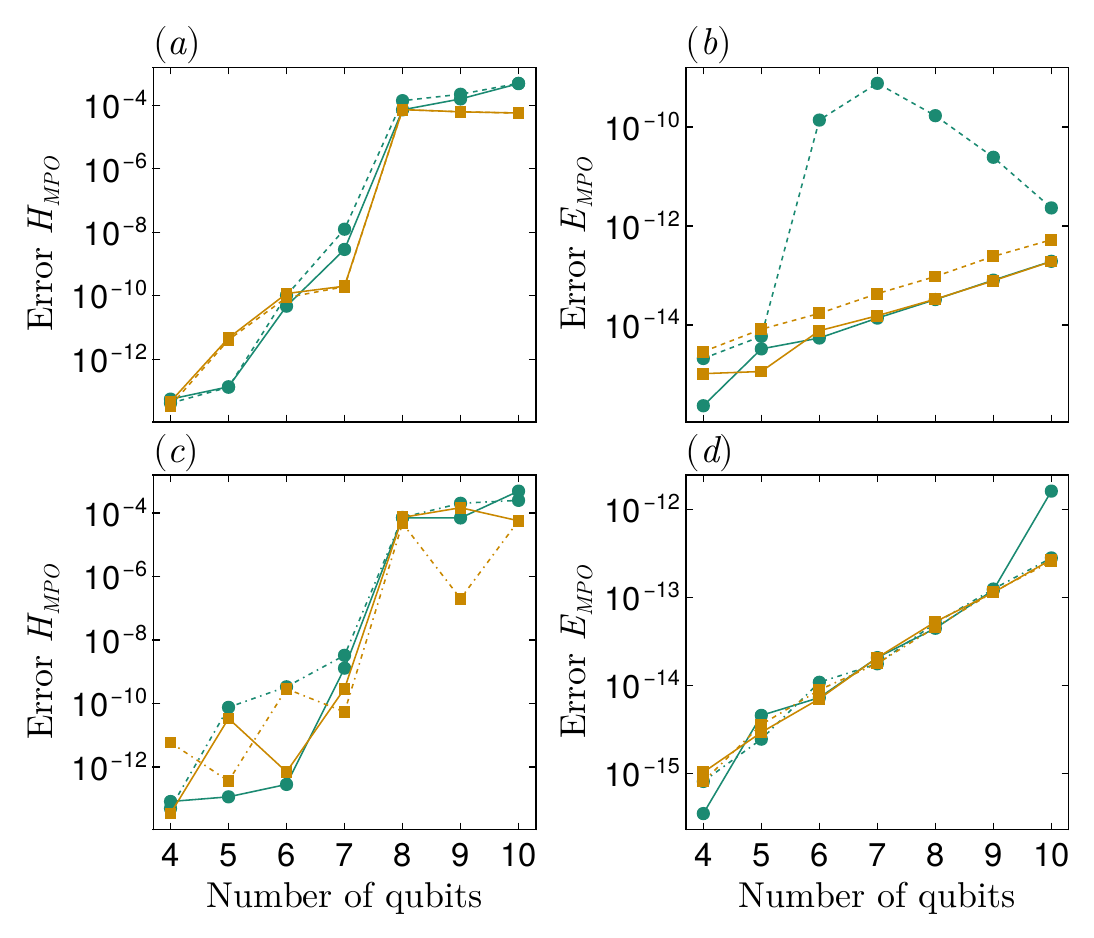}
    \caption{Accumulated bond truncation error in the Hamiltonian MPO (a,c) and the imaginary time evolution operator MPO (b,d) for quadratic (yellow squares) and random (green circles) potentials, as a function of number of qubits. (a,b) At fixed $V=5$ eV, for $\chi=30$ (solid) and $\chi=10$ (dashed). (c,d) At fixed $\chi=20$, for $V=10$ eV (dash--dash) and $V=2$ eV (dash--dot).}
    \label{fig:error_analysis}
\end{figure}

The differential operator or Hamiltonian operator is first
decomposed in the Pauli basis as a weighted sum of Pauli strings. Each Pauli string is converted into an exact MPO representation and the scaled MPOs are successively added to construct the final Hamiltonian MPO. During this addition procedure, the intermediate MPO bond dimension will grow additively. Therefore bond-dimension truncation is applied whenever the bond dimension goes beyond a decided threshold. To estimate the truncation error introduced during this process, we track the relative change in the MPO norm before and after each truncation step. The left panels of Fig.~\ref{fig:error_analysis} show the cumulative error as a function of system size for different Hamiltonians (harmonic and random potential). The results indicate that the accumulated truncation error remains 
low (less than $10^{-4}$), across different system sizes.
This demonstrates that the Pauli-basis MPO construction combined with bond truncation provides an accurate representation of the Hamiltonian.

The time evolution operator is constructed using the Suzuki–Trotter decomposition. Since each term in the decomposition corresponds to the exponential of a Pauli string, its MPO representation can be obtained analytically. The full evolution MPO is then generated through successive 
contractions of MPOs and truncation of the bond dimension.
Directly computing the operator norm error of the resulting MPO is computationally expensive. Instead, we estimate the effective error by probing the action of the approximate MPO on random states. Specifically, we generate $50$ random matrix product states and compute the norm of the resulting states after applying the exact sequence of MPO operations prior to truncation. We then construct the truncated MPO obtained after the MPO 
contractions and apply it to the same set of random states. The relative change in the state norm provides an estimate of the effective operator error, and the reported value corresponds to the maximum error over all sampled states. The right panels of Fig.~\ref{fig:error_analysis} show the accumulated error obtained during successive MPO multiplications in the Suzuki–Trotter construction. The errors remain close to machine precision for all system sizes and parameter regimes considered, indicating that the truncation performed during MPO 
contraction introduces only negligible error in the resulting evolution operator. Due to the exponential growth in the number of Pauli strings with system size, tracking the operator norm becomes computationally expensive for larger numbers of qubits. Therefore, the detailed truncation error analysis is performed only for small system sizes.

\section{Conclusion}
\label{sec:conclusion}

In this work, we developed a tensor network based framework for solving Sturm--Liouville differential equations. 
A central technical contribution of this work is the analytic expression for the matrix element of Pauli strings derived using binary encoding, allowing direct computation of Pauli-basis coefficients without explicit generation of Pauli-string matrix. 
This reduces the space complexity from \(\mathcal{O}(2^{n+1})\) to \(\mathcal{O}(2n)\) in the number of qubits, thereby removing a major computational bottleneck in Pauli-operator based constructions. 
Building on this expansion, we construct the time-evolution MPO via a Suzuki–Trotter decomposition, enabling imaginary-time evolution within the tensor-network framework. To address the growth in required number of Trotter steps with system size, we introduce a multistage state-refinement heuristic, which substantially reduces the number of Trotter steps needed to converge to low-lying eigenstates of the operator for larger systems.

We have benchmarked the method for the diffusion equation and the anharmonic oscillator. For the diffusion equation, imaginary-time evolution on $20$ qubits allowed us to accurately compute the low-lying eigenstates and eigenvalues of the Laplacian operator. The fidelities of the first 32 eigenstates exceed $0.95$.
For the 2-dimensional anharmonic oscillator mapped to two ten-qubit systems, the first $50$ eigenstates were obtained with fidelities greater than \(0.99\), while the relative errors in the corresponding eigenvalues were below \(10^{-4}\). These results indicate that the proposed method can accurately solve differential equations even in cases where analytical solutions are not available. We have also analyzed the truncation errors arising during MPO constructions. The cumulative norm change accumulated during the Pauli operator based Hamiltonian MPO construction remains under $10^{-4}$ across all system sizes considered. Moreover, the errors accumulated during successive MPO contractions in the Suzuki–Trotter expansion remain close to machine precision, confirming the numerical stability of the proposed operator construction. 

We have also tested our method for various potential strengths and bond dimensions. For this analysis, we have considered harmonic, anharmonic and random potential. The first two potentials serve as a baseline in understanding the variations in the random potential as we change the potential strength and the bond dimension. As expected, we have observed that increasing the bond dimension systematically reduces the error. Also, increasing the potential strength causes the fidelity of the eigenstates of the operator with random potential to decrease much faster than those with harmonic and anharmonic potentials. Overall, the results demonstrate that Pauli basis MPO constructions combined with tensor network algorithms provide an efficient method for solving Sturm--Liouville differential equations.

While the overall tensor-network formalism discussed in our work can be implemented for much larger differential operators using more number of qubits in high-performance computers, another important aspect of our results is that the method to 
efficiently obtain expansion coefficients in the Pauli decomposition of a class of operators can be of useful and of independent interest in a wider class of problems. Similarly, the multistage state-refinement technique that accelerates convergence of imaginary-time evolution to operator eigenstates that are sufficiently smooth can be a powerful heuristic in several numerical tools that enable investigations of complex quantum dynamics.

\bibliography{references}

\clearpage

\onecolumngrid
\appendix

\section{Definitions and Notations\label{Appendix:definitions and notation}}

\noindent The single-qubit Pauli matrices,
\[
\mathcal{P_1} = \left\{
\begin{bmatrix} 1 & 0 \\ 0 & 1 \end{bmatrix},\;
\begin{bmatrix} 1 & 0 \\ 0 & -1 \end{bmatrix},\;
\begin{bmatrix} 0 & 1 \\ 1 & 0 \end{bmatrix},\;
\begin{bmatrix} 0 & -i \\ i & 0 \end{bmatrix}
\right\} \equiv \{I, Z, X, Y\},
\]
span the operator space of a single-qubit Hilbert space $\mathcal{H}$.
Their $n$-fold tensor products,
\[
\mathcal{P}_n = \{ p_1 \otimes p_2 \otimes \cdots \otimes p_n \mid p_j \in \mathcal{P_1} \},
\]
define the set of $n$-qubit Pauli strings (denoted $P \in \mathcal{P}_n$), which form a
complete orthogonal basis for the operator space of $\mathcal{H}^{\otimes n}$ under
the Hilbert--Schmidt inner product,
\[
\operatorname{Tr}(P_i^\dagger P_j) = 2^n\,\delta_{ij}, \qquad P_i, P_j \in \mathcal{P}_n.
\]

\section{Proofs for some propositions\label{Appendix:Proofs for some propositions}}

The definitions of the structure number $\alpha$, value number $\beta$, and the analytical expression for the matrix element of Pauli string are given in the main text.

\vspace{2mm}

\noindent\textbf{Proposition 1} (Sparsity of Pauli strings):
\textit{Every Pauli string $P \in \mathcal{P}_n$ has exactly one nonzero entry in each row and each column.}

\begin{proof}
We proceed by induction on the number of qubits $n$. By inspection, each of the four Pauli matrices $I$, $Z$, $X$, $Y$ has exactly one nonzero entry per row and per column.

\noindent Assume every Pauli string in $\mathcal{P}_m$ has exactly one nonzero entry per row and per column. Consider a Pauli string $P \in \mathcal{P}_{m+1}$, which can be written as $P = p \otimes P_m$, where $p \in \mathcal{P}$ and $P_m \in \mathcal{P}_m$. The Kronecker product $p \otimes P_m$ is a $2^{m+1} \times 2^{m+1}$ block matrix whose $(a,b)$ block is $p_{ab}\, P_m$. Since $p$ has exactly one nonzero entry per row and per column, there is exactly one nonzero block $p_{ab}\, P_m$ in each block-row and each block-column. By the inductive hypothesis, $P_m$ itself has exactly one nonzero entry per row and per column, so each nonzero block contributes exactly one nonzero entry to each of its rows and columns. Therefore $P$ has exactly one nonzero entry per row and per column.

By induction, every Pauli string $P \in \mathcal{P}_n$ has exactly one nonzero entry per row and per column for all $n \geq 1$.
\end{proof}

\noindent\textbf{Proposition 2} (Pauli basis coefficients):
\textit{Let $\hat{O}$ be any operator on $\mathcal{H}^{\otimes n}$ expanded in the Pauli basis as $\hat{O} = \sum_\ell c_\ell P^{(\ell)}$, where $P^{(\ell)} \in \mathcal{P}_n$ has associated binary numbers $\alpha$ and $\beta$. Then}
\begin{equation}
c_\ell = \frac{1}{2^n} \sum_{j=0}^{2^n - 1}
(-i)^{\operatorname{K}(\alpha \land \beta)}\,
(-1)^{\operatorname{\Pi}(\beta \land j)}\,
\hat{O}_{\alpha \oplus j,\, j}.
\label{eq_Appendix:coefficient_equation}
\end{equation}

\begin{proof}
Multiplying the expansion $\hat{O} = \sum_\ell c_\ell P^{(\ell)}$ by $P^{(\ell)\dagger}$, taking the trace, and applying orthogonality gives
\[
c_\ell = \frac{1}{2^n}\operatorname{Tr}\!\left(P^{(\ell)\dagger}\hat{O}\right)
=\frac{1}{2^n}\operatorname{Tr}\!\left(P^{(\ell)}\hat{O}\right)
= \frac{1}{2^n}\sum_{j,k} \left(P^{(\ell)}_{jk}\right) \hat{O}_{kj}.
\]
Using $P^\dagger=P\ \ \forall P \in \mathcal{P}_n$. From Eq.~\eqref{eq:pauli_matrix_representation}, the only nonzero entry in row $j$ of $P^{(\ell)}$ occurs at $k = \alpha \oplus j$, so the double sum collapses to a single sum over $j$,
\[
c_\ell = \frac{1}{2^n}\sum_{j=0}^{2^n-1} \left[(-i)^{\operatorname{K}(\alpha \land \beta)}(-1)^{\operatorname{\Pi}(\beta \land j)}\right]\hat{O}_{\alpha\oplus j,\, j}.
\]
\end{proof}

%%%%%%%%%%%%%Proposition for discretization

\noindent\textbf{Proposition 3}(Discretization of Sturm--Liouville operator) Let the Sturm--Liouville operator be:
\begin{equation*}
\mathcal{L}u(x)
=
\frac{d}{dx}\!\left(q_0(x)\frac{du}{dx}\right)
+q_1(x)u(x),
\end{equation*}
where $q_0$ and $q_1$ are well behaved functions. Consider a uniform grid
$
x_i=a+ih,\ i=0,\ldots,N-1,
$
with spacing $h$. Using the centred finite-difference approximation
\begin{equation*}
\left.
\frac{d}{dx}\!\left(q_0(x)\frac{du}{dx}\right)
\right|_{x_i}
=
\frac{1}{h^2}
\left[
q_{i+\frac12}(u_{i+1}-u_i)
-
q_{i-\frac12}(u_i-u_{i-1})
\right]
+\mathcal{O}(h^2),
\end{equation*}
where $q_{i+\frac12}=q_0\!\left(x_i+\frac{h}{2}\right),$ the discrete operator is represented by the tridiagonal matrix
\begin{equation*}
L =
\begin{pmatrix}
a_1 & b_1 & 0 & \cdots & 0 \\
c_1 & a_2 & b_2 & \cdots & 0 \\
0 & c_2 & a_3 & \cdots & 0 \\
\vdots & \vdots & \vdots & \ddots & b_{N-1} \\
0 & 0 & 0 & c_{N-1} & a_N
\end{pmatrix}.
\end{equation*}
with
\begin{align*}
a_i &= -\frac{1}{h^2}\left[
q_0\!\left(x_{i+\frac12}\right)
+
q_0\!\left(x_{i-\frac12}\right)
\right] + q_1(x_i), \\
b_i &= c_i= \frac{1}{h^2}\, q_0\!\left(x_{i+\frac12}\right).
% , \\
% c_i &= -\frac{1}{h^2}\, q_0\!\left(x_{i-\frac12}\right).
\end{align*}
Furthermore, if the same centered stencil is used on both sides of each grid point, then $b_{i+1}=c_i$, and hence $L=L^{T}$ is symmetric.

\begin{proof}
The centered discretization of the flux
$q_0(x)u'(x)$ couples only the neighboring grid values
$u_{i-1}$, $u_i$, and $u_{i+1}$, so every row of the discrete operator has at most three nonzero entries. Expanding the discrete flux gives
\[
(Lu)_i
=
\frac{q_{i+\frac12}}{h^2}u_{i+1}
-
\frac{q_{i-\frac12}+q_{i+\frac12}}{h^2}u_i
+
\frac{q_{i-\frac12}}{h^2}u_{i-1}
+
q_1(x_i)u_i,
\]
which immediately yields the above coefficients. Since the coefficient multiplying $u_{i+1}$ in row $i$ equals the coefficient multiplying $u_i$ in row $i+1$, namely
\[
\frac{q_{i+\frac12}}{h^2},
\]
the upper and lower off-diagonal entries coincide. Therefore the resulting matrix is symmetric and tridiagonal.
\end{proof}

\noindent\textbf{Proposition 4} (Restriction on $\alpha$) Let
\(
\hat{O}=\sum_{\ell} c_{\ell}P^{(\ell)}
\)
be the Pauli basis expansion of an $n$ qubit tridiagonal matrix $\hat{O}$. Then every Pauli string with a non-zero coefficient satisfies
\begin{equation}
\alpha=
\underbrace{00\cdots0}_{n-w}
\underbrace{11\cdots1}_{w},
\qquad
w=0,1,\ldots,n.
\label{Aeq:alpha}
\end{equation}
Equivalently,
\(
\alpha=2^{w}-1.
\)

\begin{proof}
Consider tridiagonal $\hat{O}$ in equation~\ref{eq:pauli_coefficient_equation}. We can observe that the only possible values of $\alpha$ satisfy
\(\alpha=j\oplus j=0,\)
or
\(\alpha=j\oplus(j+1),\). The diagonal case immediately gives $\alpha=0$, corresponding to $w=0$. It therefore remains to determine the possible values of
$\alpha=j\oplus(j+1)$. Suppose that the binary representation of $j$
contains exactly $w-1$ trailing ones. Then $j$ can be written as
\[
j=
b_{n}\cdots b_{w+1}\,
0\,
\underbrace{11\cdots1}_{w-1},
\]
where the bits $b_i$ are arbitrary. Adding one flips the rightmost zero
to one and all trailing ones to zeros, giving 
\(
j+1=
b_{n}\cdots b_{w+1}\,
1\,
\underbrace{00\cdots0}_{w-1}.
\)
Hence,
\(
\alpha=j\oplus(j+1)
=
\underbrace{00\cdots0}_{n-w}
\underbrace{11\cdots1}_{w},
\)
since the higher-order bits remain unchanged while exactly the last
$w$ bits are flipped. Thus every binary string of the form \eqref{Aeq:alpha} occurs,
and no other binary string is possible.
\end{proof}

\noindent\textbf{Proposition 5} (Parity of $\mathrm{K}(\alpha\wedge\beta)$ for real Hermitian operators)
Let $\hat{O}=\hat{O}^{T}$ be real symmetric, expanded as $\hat{O}=\sum_\ell c_\ell P^{(\ell)}$. Then $c_\ell=0$ unless $\mathrm{K}(\alpha\wedge\beta)$ is even.
\begin{proof}
Define $Y'\equiv iY$, real and antisymmetric, while $X,Z,I$ are real symmetric. Writing $P^{(\ell)}=(-i)^{\mathrm{K}(\alpha\wedge\beta)}P'^{(\ell)}$ with $P'^{(\ell)}$ real (each $Y\to Y'$), transposition gives $(P'^{(\ell)})^{T}=(-1)^{\mathrm{K}(\alpha\wedge\beta)}P'^{(\ell)}$. If $\mathrm{K}(\alpha\wedge\beta)$ is odd, $(P'^{(\ell)})^{T}=-P'^{(\ell)}$, so using $\hat{O}^T=\hat{O}$ and cyclicity of the trace,
\begin{equation*}
\operatorname{Tr}(P'^{(\ell)}\hat{O})=\operatorname{Tr}\big[(P'^{(\ell)}\hat{O})^{T}\big]=\operatorname{Tr}\big(\hat{O}(P'^{(\ell)})^{T}\big)
\end{equation*}
\begin{equation*}
=-\operatorname{Tr}(P'^{(\ell)}\hat{O})\ \Rightarrow\ \operatorname{Tr}(P'^{(\ell)}\hat{O})=0.
\end{equation*}
Since $c_\ell\propto\operatorname{Tr}(P'^{(\ell)}\hat{O})$, it vanishes whenever $\mathrm{K}(\alpha\wedge\beta)$ is odd.
\end{proof}

\noindent\textbf{Proposition 6} (Number of Pauli strings in a tridiagonal operator)
Let $\hat{O}$ be a general $n$-qubit tridiagonal operator. The number of Pauli strings $P^{(\ell)}$ with nonzero coefficient is $(n+1)2^{n}$. If, in addition, $\hat{O}$ is symmetric, this number reduces to $n\,2^{n-1}+2^{n}.$

\begin{proof}
By the Restriction on $\alpha$ (proposition 4), $\alpha=2^{w}-1$ for $w=0,1,\ldots,n$, giving $n+1$ allowed values; $\beta$ is unrestricted, giving $2^{n}$ choices each. Hence the total is $(n+1)2^{n}$.

If $\hat{O}$ is symmetric, proposition 5 further requires $\mathrm{K}(\alpha\wedge\beta)$ even. For $\alpha=0$, $\alpha\wedge\beta=0$ for all $\beta$, so all $2^{n}$ values of $\beta$ are allowed. For each of the $n$ nonzero choices $\alpha=2^{w}-1$, $\mathrm{K}(\alpha\wedge\beta)$ depends only on any one bit of $\beta$; fixing the first $n-1$ bits freely ($2^{n-1}$ choices) fixes the parity of the last bit uniquely, leaving exactly $2^{n-1}$ allowed values of $\beta$. 
\end{proof}

\noindent\textbf{Proposition 7} (Pauli strings for a Laplacian operator)
For an $n$-qubit Laplacian tridiagonal operator $\hat O$ (constant diagonal $d$ and constant off-diagonals $u$), the coefficient $c_\ell$ for $\alpha=2^{w}-1$ is nonzero only if $\beta<2^{w}$, $w=0,\dots,n$, giving a total of $2^{n}$ Pauli strings with nonzero coefficient.

\begin{proof}
By Proposition 5, $K(\alpha\wedge\beta)$ is even, so $(-i)^{K(\alpha\wedge\beta)}=s(\alpha,\beta)\in\{\pm1\}$ is independent of $j$ and factors out of Eq.~\eqref{eq_Appendix:coefficient_equation}, leaving $c_\ell\propto\sum_j(-1)^{\Pi(\beta\wedge j)}\hat O_{\alpha\oplus j,j}$.

\emph{Diagonal, $\alpha=0$:} $\hat O_{j,j}=d$ for all $j$, so $c_\ell\propto\sum_{j=0}^{2^n-1}(-1)^{\Pi(\beta\wedge j)}=2^n\delta_{\beta,0}$; only $\beta=0$ survives.

\emph{Off-diagonal, $\alpha=2^{w}-1\neq0$:} take $\alpha\oplus j=j+1$ (the lower-shift case $\alpha\oplus j=j-1$ follows identically by symmetry). Since $\alpha$ flips only the bottom $w$ bits, matching $\alpha\oplus j=j+1$ fixes those bottom $w$ bits of $j$ to the pattern $0\underbrace{1\cdots1}_{w-1}$, while the top $n-w$ bits of $j$ are free; $\hat O_{\alpha\oplus j,j}=u$ for all such $j$. Splitting $\beta=(\beta_{\rm low},\beta_{\rm high})$ accordingly, $\Pi(\beta\wedge j)=\Pi(\beta_{\rm low}\wedge j_{\rm low})\oplus\Pi(\beta_{\rm high}\wedge j_{\rm high})$. Summing over the free bits $j_{\rm high}$ gives $2^{n-w}\delta_{\beta_{\rm high},0}$, so $c_\ell\neq0$ only if $\beta_{\rm high}=0$, i.e. $\beta<2^{w}$, with the bottom $w$ bits of $\beta$ unconstrained. Summing the $2^w$ allowed values of $\beta$ over $w=0,\dots,n-1$, with the boundary term at $w=n$ contributing $\beta=0$,
\begin{equation*}
\sum_{w=0}^{n-1}2^{w}+1=(2^{n}-1)+1=2^{n}.
\end{equation*}
\end{proof}

\section{Pseudo code for Pauli basis coefficient\label{Appendix:Algorithms}}

The coefficients of the Pauli strings in the Pauli-basis expansion of a tridiagonal matrix can be computed using the following subroutine. For computational efficiency, the evaluation is decomposed into three functions, $\operatorname{C0}$, $\operatorname{C1}$, and $\operatorname{C2}$. The function $\operatorname{C0}$ computes the contribution from the diagonal elements, while $\operatorname{C1}$ and $\operatorname{C2}$ compute the contributions from the upper and lower diagonals, respectively.

\begin{algorithm}[H]
\caption{Computation of Pauli basis coefficients for discretized Sturm--Liouville differential operator}
\begin{algorithmic}

\State \textbf{Input}
\Statex $n \gets$ Number of qubits
\Statex $a \gets$ Diagonal function
\Statex $b \gets$ Upper-diagonal function
\Statex $c \gets$ Lower-diagonal function
\Statex $\alpha \gets$ Structure Number
\Statex $\beta \gets$ Value Number
\Comment{$\alpha = 2^{w}-1$ for Tri-diagonal matrix where $w=nX+nY$}

\State \textbf{Initialize} Coeff : Dictionary

\Function{C0}{$n,\alpha,\beta,a$}
  \If{$\alpha = 0$}
    \State $c0=\sum_{j=0}^{2^n-1}
    (-1)^{\text{K}(\beta\land j)}\cdot a[j+1]$
    \State \Return $c0$
  \Else
    \State \Return $0$
  \EndIf
\EndFunction

\Function{C1}{$n,w,\beta,b$}
  \If{$w = 0$}
    \State \Return $0$
  \ElsIf{$0 < w < n$}
    \State $c1=\sum_{j=0}^{2^{n-w}-1}
    (-1)^{\text{K}(\beta\land(2^{w-1}+j\cdot2^w))}
    \cdot b[2^{w-1}+j\cdot2^w]$%b[2^{w-1}+j\cdot2^w+1]
    \State \Return $c1$
  \ElsIf{$w = n$}
    \State \Return
    $(-1)^{\text{K}(\beta\land2^{w-1})}\cdot b[2^{w-1}]$%b[2^{w-1}+1]
  \EndIf
\EndFunction

\Function{C2}{$n,w,\beta,c$}
  \If{$w = 0$}
    \State \Return $0$
  \ElsIf{$0 < w < n$}
    \State $c2=\sum_{j=0}^{2^{n-w}-1}
    (-1)^{\text{K}(\beta\land(2^{w-1}+j\cdot2^w-1))}
    \cdot c[2^{w-1}+j\cdot2^w]$
    \State \Return $c2$
  \ElsIf{$w = n$}
    \State \Return
    $(-1)^{\text{K}(\beta\land(2^{w-1}-1))}
    \cdot c[2^{w-1}]$
  \EndIf
\EndFunction

\Statex

\Procedure{Compute Coefficient}{}
\State
$\text{Coeff[name]}
\gets
\dfrac{(- i)^{\;\mathrm{mod}(\mathrm{K}(\alpha\land\beta),4)}}{2^n} \cdot \big(
\Call{C0}{n,\alpha,\beta,a}
+\Call{C1}{n,w,\beta,b}
+\Call{C2}{n,w,\beta,c}
\big)$
\EndProcedure

\end{algorithmic}
\end{algorithm}

\end{document}